\newcommand{\ucite}[1]{\textsuperscript{\cite{#1}}}
\begin{document}

\preprint{APS/123-QED}

\title{Locally stable sets with minimum cardinality}

	\author{Hai-Qing Cao}
\affiliation{School of Mathematical Sciences, Hebei Normal University, Shijiazhuang, 050024, China}
	\author{Mao-Sheng Li}
\affiliation{ School of Mathematics,
	South China University of Technology, Guangzhou
	510641,  China}

\author{Hui-Juan Zuo}
\email{huijuanzuo@163.com}
\affiliation{School of Mathematical Sciences, Hebei Normal University, Shijiazhuang, 050024, China}

\affiliation{ Hebei Key Laboratory of Computational Mathematics and Applications, Shijiazhuang, 050024, China}
\affiliation{Hebei International Joint Research Center for Mathematics and Interdisciplinary Science, Shijiazhuang, 050024, China
}%


\begin{abstract}
The nonlocal set has received wide attention over recent years. Shortly before, Li and Wang \href{https://arxiv.org/abs/2202.09034} {arXiv: 2202.09034} proposed the concept of a locally stable set: the only possible orthogonality preserving measurement on each subsystem is trivial. Locally stable sets present  stronger nonlocality than those sets that are just locally indistinguishable. In this work, we focus on the constructions of locally stable sets in multipartite quantum systems. First, two lemmas are put forward to prove that an orthogonality-preserving local measurement must be trivial. Then we present the constructions of locally stable sets with minimum cardinality in bipartite quantum systems $\mathbb{C}^{d}\otimes \mathbb{C}^{d}$ $(d\geq 3)$ and $\mathbb{C}^{d_{1}}\otimes \mathbb{C}^{d_{2}}$ $(3\leq d_{1}\leq d_{2})$. Moreover, for the multipartite quantum systems $(\mathbb{C}^{d})^{\otimes n}$ $(d\geq 2)$ and $\otimes^{n}_{i=1}\mathbb{C}^{d_{i}}$ $(3\leq d_{1}\leq d_{2}\leq\cdots\leq d_{n})$, we also obtain $d+1$ and $d_{n}+1$ locally stable orthogonal states respectively. Fortunately, our constructions reach the lower bound of the cardinality on the locally stable sets, which provides a positive and complete answer to an open problem raised in \href{https://arxiv.org/abs/2202.09034}{arXiv: 2202.09034}.

\begin{description}
\item[PACS numbers]
03.65.Ud, 03.67.Mn
\end{description}
\end{abstract}

\pacs{Valid PACS appear here}
\maketitle


\section{\label{sec:level1}Introduction\protect}

A set of orthogonal quantum states is locally indistinguishable if it is not possible to optimally distinguish the states by any sequence of local operations and classical communications (LOCC). In 1999, Bennett {\it et al}. \cite{Bennett1999} first presented a set of locally indistinguishable orthogonal product bases in $\mathbb{C}^{3}\otimes \mathbb{C}^{3}$, which shows the phenomenon of nonlocality without entanglement. With the increasing research of nonlocality, there are many relevant references on locally indistinguishable orthogonal entangled states \cite{Walgate2000,Ghosh2001,Walgate2002,Ghosh2004,Fan2004,Nathanson2005,Cohen2007,Yu2012,Yu2015,Li2015,Wang2019}.
Especially, the locally indistinguishable orthogonal product states have attracted more attention \cite{Niset2006,Yang2013,Zhangzc2014,Wangyl2015,Zhangzc2015,Zhangxq2016,Xugb2016,Zhangzc2016,Xu2016,Wangyl2017,Zhangzc2017,Halder2018,Jiang2020,Xu2021,Zuo2022,Zhen2022} . Its closely related research branch, entanglement-assisted discrimination protocol, has also achieved fruitful results \cite{Zhangzc2018,Li2019,Lilj2019}. The local indistinguishability has wide applications in quantum cryptographic protocols such as secret sharing and data hiding \cite{Yang2015,Wang2017,Jiangdh2020,GuoGP2001,Rahaman2015,DiVincenzo2002}. That is the reason why so many scholars are engaged in the research of local discrimination of quantum states.

In 2019, Rout {\it et al}. \cite{Rout2019} proposed the concept of genuine nonlocality based on local indistinguishability. Then many interesting results spring up like mushrooms \cite{Rout2019,Li2021,Rout2021}. Recently, Halder {\it et al}. \cite{Halder2019} put forward the concept of strong nonlocality based on locally irreducible quantum states. A set of multipartite orthogonal product states is strongly nonlocal if it is locally irreducible in every bipartition. Many people began to engage in the research and a few
strongly nonlocal sets were obtained \cite{Halder2019,Zhangzc2019,Yuan2020,Shi2020,Shi2022,Li2022}.

An important method was provided to verify the local indistinguishability of orthogonal product states in Ref. \cite{Walgate2002}, which showed the fact that no matter which party goes first, he (or she) can only perform a trivial measurement. Since 2014, a great deal of research (see Refs. \cite{Zhangzc2014,Wangyl2015,Zhangzc2015,Zhangxq2016,Xugb2016,Zhangzc2016,Xu2016,Wangyl2017,Zhangzc2017,Halder2018,Jiang2020,Xu2021,Zuo2022}) on the locally indistinguishable sets of quantum states is based on the aforementioned observation. Li {\it et al}. \cite{Li2022} concentrated on the orthogonal sets of multipartite quantum states with the property: the only possible orthogonality preserving measurement on each subsystem is trivial. The set with such property is called a locally stable set. Note that locally stable sets are always locally indistinguishable. Hence they could be used  to show some particular form of distinguishability-based  nonlocality.   Li {\it et al}. \cite{Li2022} also obtained a lower bound of the cardinality on the locally stable set, i.e., if $\mathcal{S}$ is a locally stable set of orthogonal pure states in $\mathcal{H}:=\otimes^{N}_{i=1}\mathcal{H}_{A_{i}}$,
whose local dimension is dim$_{\mathbb{C}}$($\mathcal{H}_{A_{i}}$)=$d_{i}$, then~$\left| \mathcal{S} \right |$ $\geq$ max$_{i}\{d_{i} + 1\}$.  They conjectured that    this lower bound may be tight.  That is,  there may exist some locally stable set $\mathcal{S}\subseteq \mathcal{H}$  whose cardinality is exactly the aforementioned lower bound  $\max_{i}\{d_{i} + 1\}.$  In this work, we will provide a positive answer to this conjecture.
  Adding any orthogonal states to a locally stable set  (nonlocal set)  forms a new set which is again locally stable (nonlocal). Hence it is interesting to find the optimal locally stable set in the sense that,  removing any state from this set, it is impossible to achieve local stability again. Therefore, those locally stable sets in $\mathcal{H}$ with cardinality being  $\max_{i}\{d_{i} + 1\}$ are   always optimal.

In the manuscript, we aim to construct locally stable sets whose cardinality reach the lower bound indicated in Ref. \cite{Li2022}  for general multipartite quantum systems. Fortunately, we prove that there exist $d+1$ orthogonal states in $\mathbb{C}^{d}\otimes \mathbb{C}^{d}$ $(d\geq 3)$ and $d_{2}+1$ orthogonal states in $\mathbb{C}^{d_{1}}\otimes \mathbb{C}^{d_{2}}$ $(3\leq d_{1}\leq d_{2})$ are locally stable. For the multipartite cases, we present two constructions of locally stable sets in multipartite quantum systems $(\mathbb{C}^{d})^{\otimes n}$ $(d\geq 2)$ and $\otimes^{n}_{i=1}\mathbb{C}^{d_{i}}$ $(3\leq d_{1}\leq d_{2}\leq\cdots\leq d_{n})$, which contain $d+1$ and $d_{n}+1$ orthogonal states, respectively. All of the locally stable sets can reach the minimum cardinality on the locally stable set proposed in Ref. \cite{Li2022}. In addition, we found another structure of the smallest locally stable set in $\otimes^{n}_{i=1}\mathbb{C}^{d_{i}}$ $(3\leq d_{1}\leq d_{2}\leq\cdots\leq d_{n})$, which is composed of genuine entangled states apart from one full product state.

\theoremstyle{remark}
\newtheorem{definition}{\indent Definition}
\newtheorem{lemma}{\indent Lemma}
\newtheorem{theorem}{\indent Theorem}
\newtheorem{proposition}{\indent Proposition}
\newtheorem{corollary}{\indent Corollary}
\def\QEDclosed{\mbox{\rule[0pt]{1.3ex}{1.3ex}}}|
\def\QED{\QEDclosed}
\def\proof{\noindent{\indent\em Proof}.}
\def\endproof{\hspace*{\fill}~\QED\par\endtrivlist\unskip}

\section{Preliminaries}

Throughout this paper, we only consider pure states and we do not normalize states for simplicity. Here we take the computational basis $\{|i\rangle\}_{i=0}^{d_{k}-1}$ for each $d_{k}$-dimensional subsystem. For simplicity, we denote the state $\frac{1}{\sqrt{n}}(|i_1\rangle \pm |i_2\rangle\pm \cdots \pm |i_n\rangle)$ as $|i_1\pm i_2\pm\cdots \pm i_n\rangle$, $\mathbb Z_{d_{k}}=\{0, 1,\cdots, d_{k}-1\}$, $(\mathbb{C}^{d})^{\otimes n}=\mathbb{C}^{d}\otimes\mathbb{C}^{d}\otimes \cdots\otimes\mathbb{C}^{d}$, and $\otimes^{n}_{i=1}\mathbb{C}^{d_{i}}=\mathbb{C}^{d_{1}}\otimes\mathbb{C}^{d_{2}}\otimes \cdots\otimes\mathbb{C}^{d_{n}}$.
In particular, it should be pointed out that the stopper state has the expression \begin{equation}\label{eq:stopper}
|S\rangle=\otimes^{n}_{k=1}(\sum\limits_{i_{k} \in \mathbb{Z}_{d_{k}}}|i_{k}\rangle_{A_{k}})
\end{equation} For each integer $d\geq 2$, we denote $w_d=e^{\frac{2\pi \sqrt{-1}}{d}},$ i.e., a  primitive  $d$th root of unit.

All the participants perform   positive operator-valued measures (POVM) on their local sites.  Each $k$th subsystem's POVM element $M_{k}^{\dagger}M_{k}$ can be represented by a $d_{k}\times d_{k}$ matrix $E_k=(m^k_{a,b})_{a,b \in \mathbb{Z}_{d_{k}}}$ in the computational basis.  A POVM  is called a trivial measurement if all its  elements   are proportional to the identity operator. To ensure the local distinguishability the postmeasurement states should remain orthogonal. We observe that in each locally distinguishable protocol, each local measurement must preserve the orthogonality of the states. Using this observation, there is a widely used method for deducing the local indistinguishability of an orthogonal set: to preserve the orthogonality of the states, each party could only perform trivial measurement.  This method motivates the definition of locally stable.

\begin{definition}(Locally indistinguishable)\ucite{Bennett1999}
A set of orthogonal pure states in multipartite quantum systems is said to be locally indistinguishable, if it is not possible to distinguish the states by using LOCC.
\end{definition}

\begin{definition}(Locally irreducible)\ucite{Halder2019}
A set of orthogonal quantum states on $\mathcal{H}=\otimes^{n}_{i=1}\mathcal{H}_{i}$ with $n\geq 2$ and dim$\mathcal{H}_{i}\geq 2$, $i=1,2,\cdots,n$ is locally irreducible if it is not possible to eliminate one or more states from the set by orthogonality-preserving local measurements.
\end{definition}

\begin{definition}(Locally stable)\ucite{Li2022}\label{def:stable}
An orthogonal set of pure states in multipartite quantum systems is said to be locally stable if the only possible orthogonality preserving measurement on the subsystems is trivial.
\end{definition}

In Ref. \cite{Li2022}, it is shown that locally stable sets are always locally irreducible and locally irreducible sets are always locally indistinguishable; the converse is not true. Therefore, locally stable sets present the strongest form of quantum nonlocality among the three classes: locally indistinguishable sets, locally irreducible sets, and locally stable sets.

 Given an orthogonal set $\mathcal{S}=\{|\phi_i\rangle\}_{i=1}^N$ of pure states in  $\otimes^{n}_{i=1}\mathbb{C}^{d_{i}}$, if the $k$th party starts with the first orthogonality preserving measurement whose measurement element is denoted as   $E_k=(m^k_{a,b})_{a,b \in \mathbb{Z}_{d_{k}}},$  then we have
\begin{equation}\label{eq:orthogonal}
	 \langle \phi_i | I_1\otimes I_2\otimes \cdots \otimes E_k \otimes \cdots \otimes I_n |\phi_j\rangle =0
	 \end{equation}
  for all different pairs $|\phi_i\rangle,|\phi_j\rangle \in \mathcal{S}.$
Now  we put forward two simple lemmas which are useful for deducing an orthogonality preserving measurement $E_k=(m^k_{a,b})_{a,b \in \mathbb{Z}_{d_{k}}}$ to be a trivial one.


\begin{lemma} (Zero entries)\label{lem:zero}
Fix $k\in\{1,2,\cdots,n\}$. Suppose that
$$
\begin{array}{rcl}
	|\phi_{i}\rangle&=&\sum\limits^{p_{i}-1}_{t=0}\omega_{p_{i}}^{t}|i_{1}^{t}\rangle_{A_{1}}|i_{2}^{t}\rangle_{A_{2}}\cdots|i_{n}^{t}\rangle_{A_{n}},\\
|\phi_{j}\rangle&=&\sum\limits^{p_{j}-1}_{s=0}\omega_{p_{j}}^{s}|j_{1}^{s}\rangle_{A_{1}}|j_{2}^{s}\rangle_{A_{2}}\cdots|j_{n}^{s}\rangle_{A_{n}},
\end{array}$$  where $|i_{1}^{t}\rangle_{A_{1}}|i_{2}^{t}\rangle_{A_{2}}\cdots|i_{n}^{t}\rangle_{A_{n}}$ and $|j_{1}^{s}\rangle_{A_{1}}|j_{2}^{s}\rangle_{A_{2}}\cdots|j_{n}^{s}\rangle_{A_{n}}$ are mutually orthogonal and there is only one pair $(t_0,s_0)\in \mathbb{Z}_{p_i} \times \mathbb{Z}_{p_j}$ such that
$$ \prod_{\ell\neq k} \langle i_\ell^{t_0}| j_\ell^{s_0}\rangle_{A_\ell} \neq 0.$$ Then the equation  $\langle \phi_i | I_1\otimes I_2\otimes \cdots \otimes E_k \otimes \cdots \otimes I_n |\phi_j\rangle =0$  implies  that $m^k_{i^{t_{0}}_{k},j^{s_{0}}_{k}}=0$.
\end{lemma}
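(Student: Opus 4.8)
The plan is to reduce the orthogonality condition~\eqref{eq:orthogonal} to a single scalar equation by a direct expansion. Substituting the two decompositions into $\langle \phi_i | I_1\otimes \cdots \otimes E_k \otimes \cdots \otimes I_n |\phi_j\rangle$ and using that $E_k$ acts only on the $k$th tensor factor while the identities leave the other factors untouched, I would obtain
\[
\langle \phi_i |\, I_{1}\!\otimes\!\cdots\!\otimes\! E_k \!\otimes\!\cdots\!\otimes\! I_{n}\, |\phi_j\rangle
=\sum_{t,s} \overline{\omega_{p_i}^{\,t}}\,\omega_{p_j}^{\,s}
\Big(\prod_{\ell\neq k}\langle i_\ell^{t}| j_\ell^{s}\rangle_{A_\ell}\Big)\, m^k_{i_k^{t},j_k^{s}},
\]
where the sum runs over $t\in\mathbb Z_{p_i}$, $s\in\mathbb Z_{p_j}$, and I have used that, in the computational basis, $\langle i_k^{t}| E_k | j_k^{s}\rangle_{A_k}$ is precisely the matrix entry $m^k_{i_k^{t},j_k^{s}}$ of $E_k$. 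Thus the left-hand side is a linear combination of the entries of $E_k$ with coefficients $\overline{\omega_{p_i}^{\,t}}\,\omega_{p_j}^{\,s}\prod_{\ell\neq k}\langle i_\ell^{t}| j_\ell^{s}\rangle_{A_\ell}$.

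Next I would invoke the hypothesis that $(t_0,s_0)$ is the \emph{only} pair for which $\prod_{\ell\neq k}\langle i_\ell^{t_0}| j_\ell^{s_0}\rangle_{A_\ell}\neq 0$. Consequently every term of the double sum indexed by $(t,s)\neq(t_0,s_0)$ carries a vanishing coefficient, and the equation $\langle \phi_i | I_1\otimes\cdots\otimes E_k \otimes\cdots\otimes I_n |\phi_j\rangle=0$ collapses to
\[
\overline{\omega_{p_i}^{\,t_0}}\,\omega_{p_j}^{\,s_0}\Big(\prod_{\ell\neq k}\langle i_\ell^{t_0}| j_\ell^{s_0}\rangle_{A_\ell}\Big)\, m^k_{i_k^{t_0},j_k^{s_0}}=0 .
\]
Since $\overline{\omega_{p_i}^{\,t_0}}$ and $\omega_{p_j}^{\,s_0}$ are roots of unity, hence nonzero, and $\prod_{\ell\neq k}\langle i_\ell^{t_0}| j_\ell^{s_0}\rangle_{A_\ell}\neq 0$ by assumption, dividing through yields $m^k_{i_k^{t_0},j_k^{s_0}}=0$, which is the claim.

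I do not expect a genuine analytic obstacle in this lemma: the argument is essentially the one-line expansion above followed by a cancellation. The only points requiring care are bookkeeping ones — keeping track of the complex conjugation coming from the bra $\langle\phi_i|$, and observing that the stated mutual orthogonality of the product vectors within each of $|\phi_i\rangle$ and $|\phi_j\rangle$ is a standing normalization convention rather than a hypothesis this particular lemma relies on; what actually drives the conclusion is the uniqueness of the pair $(t_0,s_0)$, which forces the double sum to have a single surviving term with a provably nonzero prefactor. I would therefore present the proof in the compact two-display form sketched here.
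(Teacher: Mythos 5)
Your proposal is correct and follows essentially the same route as the paper's own proof in Appendix A: expand the double sum, note that only the $(t_0,s_0)$ term has a nonzero coefficient by hypothesis, and divide by the nonzero prefactor (your $\overline{\omega_{p_i}^{t_0}}$ is the paper's $\omega_{p_i}^{-t_0}$). No gaps.
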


\begin{lemma}(Diagonal entries)\label{lem:Dia}
Fix $k\in\{1,2,\cdots,n\}$. Let $|S\rangle$ be the stopper state defined in Eq.~\eqref{eq:stopper} and   $|\phi_{i}\rangle=\sum\limits ^{p-1}_{t=0}\omega_{p}^{t}|i_{1}^{t}\rangle _{A_{1}}|i_{2}^{t}\rangle_{A_{2}}\cdots|i_{n}^{t}\rangle_{A_{n}}$  where   there exist only two different values among $i^{0}_{k}, i^{1}_{k},\cdots, i^{p-1}_{k}$, say,  $i^{t_{0}}_{k}$ and $i^{t_{1}}_{k}$.  If   all the  off-diagonal entries of the matrix $E_k=(m^k_{a,b})_{a,b \in \mathbb{Z}_{d_{k}}}$  are zeros, then  the equation  $\langle S| I_1\otimes I_2\otimes \cdots \otimes E_k \otimes \cdots \otimes I_n |\phi_i\rangle =0$ implies that $m^k_{i^{t_{0}}_{k},i^{t_{0}}_{k}}=m^k_{i^{t_{1}}_{k},i^{t_{1}}_{k}}$.
\end{lemma}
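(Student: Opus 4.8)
The plan is to expand the inner product $\langle S| I_1\otimes \cdots \otimes E_k \otimes \cdots \otimes I_n |\phi_i\rangle$ explicitly and track which terms survive. Writing $|S\rangle = \otimes_{\ell=1}^n \big(\sum_{a\in\mathbb{Z}_{d_\ell}} |a\rangle_{A_\ell}\big)$ and substituting the given form of $|\phi_i\rangle = \sum_{t=0}^{p-1}\omega_p^t\, |i_1^t\rangle_{A_1}\cdots|i_n^t\rangle_{A_n}$, the bra–ket factorizes across the $n$ subsystems. For every index $\ell\neq k$, the factor is $\sum_{a\in\mathbb{Z}_{d_\ell}} \langle a | i_\ell^t\rangle_{A_\ell} = 1$ since exactly one computational basis vector matches. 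On the $k$th site the factor is $\sum_{a\in\mathbb{Z}_{d_k}} \langle a | E_k | i_k^t\rangle = \sum_{a\in\mathbb{Z}_{d_k}} m^k_{a,\, i_k^t}$, which is the column sum of the $i_k^t$-th column of $E_k$.

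Next I would use the hypothesis that all off-diagonal entries of $E_k$ vanish. Then the column sum $\sum_{a} m^k_{a,\,i_k^t}$ collapses to the single diagonal entry $m^k_{i_k^t,\,i_k^t}$. Hence the orthogonality equation becomes
\begin{equation*}
\sum_{t=0}^{p-1}\omega_p^t\; m^k_{i_k^t,\,i_k^t} = 0.
\end{equation*}
Now I would invoke the second hypothesis: among $i_k^0,\dots,i_k^{p-1}$ only two distinct values occur, $i_k^{t_0}$ and $i_k^{t_1}$. Partition $\mathbb{Z}_p$ into $T_0=\{t: i_k^t=i_k^{t_0}\}$ and $T_1=\{t: i_k^t=i_k^{t_1}\}$, so the sum regroups as $m^k_{i_k^{t_0},i_k^{t_0}}\big(\sum_{t\in T_0}\omega_p^t\big) + m^k_{i_k^{t_1},i_k^{t_1}}\big(\sum_{t\in T_1}\omega_p^t\big) = 0$. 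Since $T_0$ and $T_1$ partition $\mathbb{Z}_p$ and $\sum_{t\in\mathbb{Z}_p}\omega_p^t = 0$ for $p\geq 2$, we get $\sum_{t\in T_1}\omega_p^t = -\sum_{t\in T_0}\omega_p^t$, so the equation reads $\big(m^k_{i_k^{t_0},i_k^{t_0}} - m^k_{i_k^{t_1},i_k^{t_1}}\big)\big(\sum_{t\in T_0}\omega_p^t\big) = 0$, giving the claim provided $\sum_{t\in T_0}\omega_p^t \neq 0$.

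The one point needing care — and the main obstacle — is exactly this nonvanishing of $\sum_{t\in T_0}\omega_p^t$. This is where the structural assumption on $|\phi_i\rangle$ enters: the terms $|i_1^t\rangle_{A_1}\cdots|i_n^t\rangle_{A_n}$ are assumed mutually orthogonal (implicit from the Lemma~\ref{lem:zero} convention carried over), and because only two values appear on site $A_k$, the index $t$ within each block $T_\ell$ must be distinguished by the \emph{other} coordinates; combined with the coefficient pattern $\omega_p^t$ being a genuine $p$th root of unity (not a proper divisor's root), one argues that $T_0$ cannot be a union of full cosets of a subgroup that would force the partial sum to zero. In the constructions used later in the paper, $|T_0|$ and $|T_1|$ are typically $1$ and $p-1$ (or the blocks are small explicit index sets), so $\sum_{t\in T_0}\omega_p^t$ reduces to a single term $\omega_p^{t_0}\neq 0$, or to a short geometric-type sum that is manifestly nonzero; I would either state the nonvanishing as part of the hypothesis or verify it directly in each application. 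Everything else is a routine bookkeeping of column sums and the identity $\sum_{t=0}^{p-1}\omega_p^t=0$.
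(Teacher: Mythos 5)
Your derivation follows the paper's proof almost step for step: expand the inner product, observe that every factor $\sum_{a}\langle a|i_\ell^t\rangle=1$ for $\ell\neq k$, reduce the $k$th factor to the diagonal entry $m^k_{i_k^t,i_k^t}$ using the vanishing of the off-diagonal entries, and arrive at $\sum_{t=0}^{p-1}\omega_p^t\,m^k_{i_k^t,i_k^t}=0$. Where you diverge is only in the endgame. The paper splits $\mathbb{Z}_p$ into two \emph{contiguous} blocks, works two representative cases ($|T_0|=1$ and $|T_0|=2$), and asserts the rest follow ``in a similar way''; for contiguous blocks the partial sum $\sum_{t\in T_0}\omega_p^t$ is a truncated geometric series $\frac{1-\omega_p^{q}}{1-\omega_p}$, which is automatically nonzero, so those cases close without comment. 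Your unified regrouping into $\bigl(m^k_{i_k^{t_0},i_k^{t_0}}-m^k_{i_k^{t_1},i_k^{t_1}}\bigr)\sum_{t\in T_0}\omega_p^t=0$ is cleaner and exposes the one genuine subtlety that the paper's phrasing hides: for composite $p$ and a non-contiguous partition (e.g.\ $p=4$, $T_0=\{0,2\}$) the partial sum can vanish and the lemma as literally stated gives nothing. You are right that this must either be added as a hypothesis or checked per application; in every use in the paper one of the two blocks is a singleton (or $p\le 3$), so $\sum_{t\in T_0}\omega_p^t$ is a single root of unity and the conclusion holds. So your proof is correct, matches the paper's strategy, and is in fact slightly more honest about the condition under which the final step is valid.
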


The proofs of the above two Lemmas are given in Appendix \ref{append:A}.

\section{Constructions in bipartite quantum systems}
In this section, we propose the construction of locally stable sets with minimum cardinality in $\mathbb{C}^{d}\otimes \mathbb{C}^{d}$~$(d\geq 3)$ and $\mathbb{C}^{d_{1}}\otimes \mathbb{C}^{d_{2}}$~$(3\leq d_{1}\leq d_{2})$.

	\begin{figure}[h]
	\centering
	\includegraphics[scale=0.75]{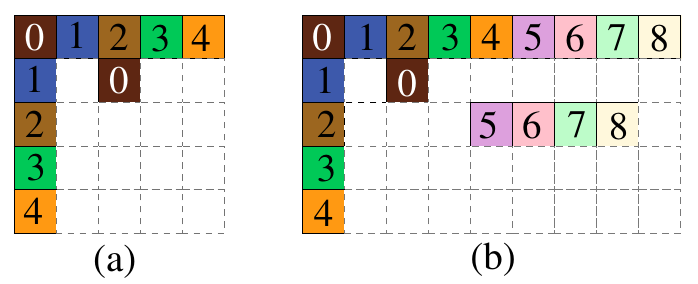}
	\caption{Intuition of the structure of states we constructed in Eq.~\eqref{eq:bidd} and  Eq.~\eqref{eq:bid1d2}. Figure (a) corresponds to the systems $\mathbb{C}^5\otimes \mathbb{C}^5$, while (b) corresponds to systems   $\mathbb{C}^5\otimes \mathbb{C}^9$. The squares indicated by the same color represent a unique state and the numbers represent the subscripts of the states. For example, the two orange squares (4,0) and (0,4) with label ``4" correspond to the state $|\phi_{4}\rangle=|40\rangle_{AB}-|04\rangle_{AB}$; the two pink squares (0,6) and (2,5) with label ``6" in the right side correspond to the state $|\phi_{6}\rangle=|06\rangle_{AB}-|25\rangle_{AB}$. }\label{Fig:bipartite}
\end{figure}
\subsection{locally stable set in $\mathbb{C}^d\otimes \mathbb{C}^d$}
\begin{theorem}\label{the:$d+1$} The following set $\mathcal{S}$ of $d+1$ orthogonal states is locally stable in $\mathbb{C}^d\otimes \mathbb{C}^d$ [see Fig. \ref{Fig:bipartite}(a) for an intuition of the example where  $d=5$]:
\begin{equation}\label{eq:bidd}
\begin{aligned}
|\phi_{0}\rangle & =|00\rangle_{AB}-|12\rangle_{AB}, \\
|\phi_{i}\rangle&=|i0\rangle_{AB}-|0i\rangle_{AB},\\
|S\rangle & =|0+\cdots+(d-1)\rangle_{A}|0+\cdots+(d-1)\rangle_{B},
\end{aligned}
\end{equation}
where $i=1,2,\cdots, d-1$,~$d\geq 3$.
\end{theorem}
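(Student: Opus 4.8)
The plan is to show that every orthogonality preserving POVM element that Alice can apply, written in the computational basis as $E_A=(m_{a,b})_{a,b\in\mathbb Z_d}$, must be a scalar multiple of $I_A$; the case of a measurement by Bob then follows from the mirror argument in which the two distinguished indices $1$ and $2$ occurring in $|\phi_0\rangle$ swap roles. After the routine check that $\mathcal S$ is an orthogonal set, I would write down all the orthogonality conditions $\langle\phi_i|E_A\otimes I_B|\phi_j\rangle=0$ over the pairs of $\mathcal S$ and split the deduction into two stages: first kill all off-diagonal entries of $E_A$, then show all of its diagonal entries coincide.

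\emph{Stage 1 (off-diagonal entries).} For a pair $|\phi_i\rangle,|\phi_j\rangle$ with $1\le i<j\le d-1$, the $B$-parts of the branches of $|\phi_i\rangle$ and of $|\phi_j\rangle$ overlap only for the branch pair $\bigl(|i0\rangle_{AB},|j0\rangle_{AB}\bigr)$, so Lemma~\ref{lem:zero} forces $m_{i,j}=0$, and Hermiticity of $E_A$ (equivalently, the reversed condition) gives $m_{j,i}=0$. For a pair $|\phi_0\rangle,|\phi_i\rangle$ with $i\in\{1,3,4,\dots,d-1\}$ the same reasoning applies and yields $m_{0,i}=m_{i,0}=0$. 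The single obstruction is the pair $|\phi_0\rangle,|\phi_2\rangle$: here the branch $|12\rangle_{AB}$ of $|\phi_0\rangle$ and the branch $|02\rangle_{AB}$ of $|\phi_2\rangle$ contribute a second nonzero $B$-overlap, Lemma~\ref{lem:zero} no longer applies, and the condition only gives the coupled relation $m_{0,2}+m_{1,0}=0$. I would close this by using $m_{0,1}=0$ (already obtained from the $i=1$ case) together with Hermiticity, so that $m_{1,0}=\overline{m_{0,1}}=0$ and hence $m_{0,2}=0$. This accounts for all off-diagonal entries, so $E_A$ is diagonal.

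\emph{Stage 2 (diagonal entries).} Now that $E_A$ is diagonal, I would bring in the stopper state $|S\rangle$ and apply Lemma~\ref{lem:Dia}. Over its two branches each state takes exactly two distinct values of the $A$-index, namely $\{0,1\}$ for $|\phi_0\rangle$ and $\{0,i\}$ for $|\phi_i\rangle$ with $1\le i\le d-1$; hence $\langle S|E_A\otimes I_B|\phi_0\rangle=0$ gives $m_{0,0}=m_{1,1}$, while $\langle S|E_A\otimes I_B|\phi_i\rangle=0$ gives $m_{0,0}=m_{i,i}$ for each $i$. Therefore all diagonal entries equal $m_{0,0}$ and $E_A=m_{0,0}I_A$. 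The same argument applied to $E_B$ (the exceptional pair there being $|\phi_0\rangle,|\phi_1\rangle$, resolved via $m_{2,0}=\overline{m_{0,2}}=0$) shows Bob's measurement is trivial as well, so $\mathcal S$ is locally stable by Definition~\ref{def:stable}. I expect the one genuinely delicate point to be the exceptional pair in Stage~1: the asymmetric term $|12\rangle_{AB}$ in $|\phi_0\rangle$ creates the extra overlap that defeats Lemma~\ref{lem:zero}, and the remedy is to exploit the Hermiticity of POVM elements so that an already-vanishing coupled entry pins the new one to zero.
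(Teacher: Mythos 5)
Your proposal is correct and follows essentially the same route as the paper's proof: kill the off-diagonal entries of $E_A$ via Lemma~\ref{lem:zero}, handle the single exceptional pair $|\phi_0\rangle,|\phi_2\rangle$ (resp.\ $|\phi_0\rangle,|\phi_1\rangle$ for Bob) through the coupled relation $m_{0,2}+m_{1,0}=0$ resolved by the already-vanishing entry, then equalize the diagonal via the stopper state and Lemma~\ref{lem:Dia}. The only cosmetic difference is that you invoke Hermiticity of the POVM element to get $m_{j,i}=\overline{m_{i,j}}$, where the paper simply applies the orthogonality condition in both orderings; both are valid.
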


\begin{proof} First, we assume that  Alice starts with the first measurement. Let $E_1=(m^1_{a,b})_{a,b \in \mathbb{Z}_d}$ represent  an element of any orthogonality-preserving measurement performed by Alice. For  each pair $|\psi\rangle,|\phi\rangle\in \mathcal{S}$  with $|\psi\rangle\neq |\phi\rangle$, we have
\begin{equation}\label{eq:ddA}
	 \langle \psi| E_1\otimes I_2|\phi\rangle=0.
\end{equation}

For $1\leq i\neq j\leq d-1$, considering Eq.~\eqref{eq:ddA} for the states $|\phi_{i}\rangle$ and $|\phi_{j}\rangle$,   we obtain $m^1_{i,j}= m^1_{j,i}= 0$ directly from  Lemma~\ref{lem:zero}.   Now we  consider  Eq.~\eqref{eq:ddA} for the states   $|\phi_{0}\rangle$ and $|\phi_{i}\rangle$ for $i=1,2,\cdots,d-1.$ If $i\in \{1,3,\cdots, d-1\}$,   we can get $m^1_{0,i}= m^1_{i,0}= 0$  from  Lemma~\ref{lem:zero}.
If $i=2$, the corresponding   Eq.~\eqref{eq:ddA} is just $(\langle 00|-\langle 12|)E_{1}\otimes I_{2}(|20\rangle -|02\rangle)=0$, i.e., $\langle 0|E_{1}|2\rangle \langle 0|I_{2}|0\rangle-\langle 0|E_{1}|0\rangle \langle0|I_{2}|2\rangle-\langle 1|E_{1}|2\rangle\langle 2|I_{2}|0\rangle+\langle 1|E_{1}|0\rangle \langle 2|I_{2}|2\rangle=0$, which gives rise to $m^1_{0,2}+m^1_{1,0}=0$. Since $m^1_{0,1}=m^1_{1,0}=0$, we can get $m^1_{0,2}=0$. Thus $m^1_{0,i}=m^1_{i,0}=0$ for $1\leq i\leq d-1$. Therefore, the off-diagonal entries of  $E_1$ are all zeros.

For $1\leq i \leq d-1$, considering Eq.~\eqref{eq:ddA} for the states $|S\rangle$ and $|\phi_{i}\rangle$, we get $m^1_{0,0}= m^1_{i,i}$   by Lemma~\ref{lem:Dia}. Therefore, $E_{1} $ is proportional to the identity matrix. Hence Alice can only start with a trivial measurement.

Suppose that Bob starts with the first orthogonality-preserving measurement whose elements are  represented as $E_2=(m^2_{a,b})_{a,b \in \mathbb{Z}_d}.$  Then for  each pair $|\psi\rangle,|\phi\rangle\in \mathcal{S}$  with $|\psi\rangle\neq |\phi\rangle$, we have
\begin{equation}\label{eq:ddB}
	\langle \psi| I_1\otimes E_2|\phi\rangle=0.
\end{equation}

 In the same way, considering Eq.~\eqref{eq:ddB} for the states $|\phi_{i}\rangle$ and $|\phi_{j}\rangle$,   we obtain $m^2_{i,j}= m^2_{j,i}= 0$ directly from  Lemma~\ref{lem:zero}  for $1\leq i\neq j\leq d-1$.
  Now we consider Eq.~\eqref{eq:ddB}  for the states $|\phi_{0}\rangle$ and $|\phi_{i}\rangle$ for $1\leq i\leq d-1$. If  $2\leq i\leq d-1$, we have $m^2_{0,i}= m^2_{i,0}= 0$  by Lemma~\ref{lem:zero}.
If  $i=1$, we have  $(\langle 00|-\langle 12|)I_{1}\otimes E_{2}(|10\rangle -|01\rangle)=0$, i.e., $\langle 0|I_{1}|1\rangle \langle 0|E_{2}|0\rangle-\langle 0|I_{1}|0\rangle \langle0|E_{2}|1\rangle-\langle 1|I_{1}|1\rangle \langle 2|E_{2}|0\rangle+\langle 1|I_{1}|0\rangle \langle 2|E_{2}|1\rangle=0$, which gives rise to $m^2_{0,1}+m^2_{2,0}=0$. Since $m^2_{0,2}=m^2_{2,0}=0$, we can get $m^2_{0,1}=0$. Thus $m^2_{0,i}=m^2_{i,0}=0$ for $1\leq i\leq d-1$.  Therefore, the off-diagonal entries of  $E_2$ are all zeros.

For $1\leq i \leq d-1$, considering  Eq.~\eqref{eq:ddB} for the states $|S\rangle$ and $|\phi_{i}\rangle$, we get $m^2_{0,0}= m^2_{i,i}$   by Lemma~\ref{lem:Dia}.
Therefore, $E_{2}$ is proportional to the identity matrix. Bob can only implement a trivial orthogonality-preserving measurement also.

Thus the above $d+1$ states are locally stable by definition. This completes the proof.
\end{proof}

Specifically, the construction is not unique, where $|\phi_{0}\rangle =|00\rangle_{AB}-|12\rangle_{AB}$ can be $|\phi_{0}\rangle =|00\rangle_{AB}-|1k\rangle_{AB}$ $(2\leq k\leq d-1)$. This is true for other examples presented.
\subsection{locally stable set in $\mathbb{C}^{d_{1}}\otimes \mathbb{C}^{d_{2}}$}

\begin{theorem}\label{the:$d_{2}+1$} Let $3\leq d_{1}\leq d_{2}$. The following set $\mathcal{S}$ of $d_{2}+1$ orthogonal states is locally stable in $\mathbb{C}^{d_{1}}\otimes \mathbb{C}^{d_{2}}$  [see Fig. \ref{Fig:bipartite} (b) for an intuition of the example where $d_1=5$ and $d_2=9$]:
\begin{equation}\label{eq:bid1d2}
\begin{aligned}
|\phi_{0}\rangle & =|00\rangle_{AB}-|12\rangle_{AB}, \\
|\phi_{i}\rangle&=|i0\rangle_{AB}-|0i\rangle_{AB},~~~1\leq i\leq d_{1}-1\\
|\phi_{j}\rangle&=|0j\rangle_{AB}-|2(j-1)\rangle_{AB},~~~d_{1}\leq j\leq d_{2}-1, \\
|S\rangle & =|0+\cdots+(d_{1}-1)\rangle_{A}|0+\cdots+(d_{2}-1)\rangle_{B}.
\end{aligned}
\end{equation}

\end{theorem}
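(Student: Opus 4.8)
The plan is to follow the same two‑stage scheme used in Theorem~\ref{the:$d+1$}: show that every element $E_1=(m^1_{a,b})_{a,b\in\mathbb{Z}_{d_1}}$ of an orthogonality‑preserving measurement of Alice, and every element $E_2=(m^2_{a,b})_{a,b\in\mathbb{Z}_{d_2}}$ of such a measurement of Bob, is proportional to the identity. In each case I would first use Lemma~\ref{lem:zero} to force all off‑diagonal entries to vanish, and then use Lemma~\ref{lem:Dia} together with the stopper state $|S\rangle$ to force all diagonal entries to coincide.

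Alice's side is essentially a transcription of the corresponding part of Theorem~\ref{the:$d+1$} with $d$ replaced by $d_1$. The only thing to verify is that the new states $|\phi_j\rangle=|0j\rangle_{AB}-|2(j-1)\rangle_{AB}$, $d_1\le j\le d_2-1$, which on Alice's side involve only the indices $0$ and $2$, impose nothing new: each relation $\langle\psi|E_1\otimes I_2|\phi\rangle=0$ in which such a state appears either holds automatically or reduces to $m^1_{0,2}=m^1_{2,0}=0$, a relation already obtained from $|\phi_0\rangle,|\phi_1\rangle,|\phi_2\rangle$ exactly as in Theorem~\ref{the:$d+1$}. Hence $E_1$ is diagonal, and applying Lemma~\ref{lem:Dia} to the pairs $(|S\rangle,|\phi_i\rangle)$, $1\le i\le d_1-1$, gives $m^1_{0,0}=m^1_{i,i}$, so $E_1\propto I_{d_1}$.

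Bob's side is where the new states do real work, and the recursive step there is the part I expect to be the crux. I would proceed in four stages. (i) On the top‑left $d_1\times d_1$ block of $E_2$ (indices in $\{0,\dots,d_1-1\}$) argue exactly as in Theorem~\ref{the:$d+1$} to get $m^2_{a,b}=0$ for all $a\ne b$ with $a,b\le d_1-1$. (ii) For $d_1\le j\le d_2-1$, the relations $\langle\phi_0|I_1\otimes E_2|\phi_j\rangle=0$ and $\langle\phi_j|I_1\otimes E_2|\phi_0\rangle=0$ collapse, since the Alice‑overlaps $\langle 0|2\rangle$, $\langle 1|0\rangle$, $\langle 1|2\rangle$ all vanish, to $m^2_{0,j}=0$ and $m^2_{j,0}=0$; combined with (i) this clears the whole $0$th row and column of $E_2$ off the diagonal. (iii) For $1\le i\le d_1-1$ and $d_1\le j\le d_2-1$: if $i\ne 2$, Lemma~\ref{lem:zero} applied to $(|\phi_i\rangle,|\phi_j\rangle)$ gives $m^2_{i,j}=m^2_{j,i}=0$ at once; if $i=2$, the relation carries one extra term proportional to $m^2_{0,j-1}$ (resp.\ $m^2_{j-1,0}$), already zero by (i)--(ii), so $m^2_{2,j}=m^2_{j,2}=0$ as well. (iv) For $d_1\le j\ne j'\le d_2-1$, Lemma~\ref{lem:zero} no longer applies, because both product components of $|\phi_j\rangle$ have nonzero Alice‑overlap with a component of $|\phi_{j'}\rangle$; instead the orthogonality relation becomes the recursion $m^2_{j,j'}+m^2_{j-1,j'-1}=0$. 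Iterating this ``shift down a super/sub‑diagonal'' finitely many times, the smaller of the two indices drops below $d_1$ and the entry vanishes by (i)--(iii); hence $m^2_{j,j'}=0$. At this point $E_2$ is diagonal, and Lemma~\ref{lem:Dia} applied to $(|S\rangle,|\phi_i\rangle)$, $1\le i\le d_1-1$, and to $(|S\rangle,|\phi_j\rangle)$, $d_1\le j\le d_2-1$, yields the chain $m^2_{0,0}=m^2_{1,1}=\cdots=m^2_{d_1-1,d_1-1}=m^2_{d_1,d_1}=\cdots=m^2_{d_2-1,d_2-1}$, so $E_2\propto I_{d_2}$. Combining the two sides, $\mathcal{S}$ is locally stable. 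The main obstacle is organising step (iv) so that the recursion always terminates inside the region already settled by (i)--(iii); everything else is a direct adaptation of the proof of Theorem~\ref{the:$d+1$}, and when $d_1=d_2$ the extra states are absent and the statement reduces to that theorem.
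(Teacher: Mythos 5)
Your proposal is correct and follows essentially the same route as the paper's proof: Alice's side is handled as in the $\mathbb{C}^d\otimes\mathbb{C}^d$ case, and Bob's side proceeds through exactly the same four stages, including the crucial recursion $m^2_{j,j'}=-m^2_{j-1,j'-1}$ that is iterated until the smaller index falls into the block already shown to vanish, followed by Lemma~\ref{lem:Dia} to equate the diagonal entries. The termination of the recursion you flag as the main obstacle is handled in the paper precisely as you describe, ending at $m^2_{d_1-1,\,j'-j+d_1-1}=0$.
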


\begin{proof} Obviously, Alice could only start with trivial orthogonality-preserving measurement by the same argument as  case  $\mathbb{C}^{d}\otimes \mathbb{C}^{d}$. We only need to show  that the orthogonality-preserving measurement Bob could perform is the  trivial one. Suppose that Bob starts with the first orthogonality-preserving measurement whose elements are  represented as $E_2=(m^2_{a,b})_{a,b\in\mathbb{Z}_{d_2}}.$  Then for  each pair $|\psi\rangle,|\phi\rangle\in \mathcal{S}$  with $|\psi\rangle\neq |\phi\rangle$, we have
	\begin{equation}\label{eq:d1d2B}
		\langle \psi| I_1\otimes E_2|\phi\rangle=0.
	\end{equation}

With a similar argument as the case $\mathbb{C}^d\otimes \mathbb{C}^d,$ we could obtain that $m^2_{i,i'}=m^2_{i',i}=0$ for all  $0\leq i\neq i'\leq d_1-1.$
Considering Eq.~\eqref{eq:d1d2B} for the states $|\phi_{0}\rangle$ and $|\phi_{j}\rangle$,  we directly get $m^2_{0,j}= m^2_{j,0}= 0$ for $d_{1}\leq j\leq d_{2}-1$ by Lemma~\ref{lem:zero}.

Now we consider Eq.~\eqref{eq:d1d2B} for  the states $|\phi_{i}\rangle$ and $|\phi_{j}\rangle$  for $1\leq i\leq d_1-1,$ and $d_1\leq j\leq d_2-1.$  If $i\neq 2$,   we get $m^2_{i,j}=m^2_{j,i}=0$  directly from  Lemma~\ref{lem:zero}. If $i=2$, we have $\langle 2|I_{1}|0\rangle \langle 0|E_2|j\rangle-\langle 2|I_{1}|2\rangle \langle0|E_2|j-1\rangle-\langle 0|I_{1}|0\rangle \langle 2|E_2|j\rangle+\langle 0|I_{1}|2\rangle \langle 2|E_2|j-1\rangle=0$ which deduces that $m^2_{0,j-1}+m^2_{2,j}=0$. Since $m^2_{0,j-1}=m^2_{j-1,0}=0$, we  have $m^2_{2,j}=0.$  Therefore, we have $m^2_{i,j}=m^2_{j,i}=0$ for all $1\leq i\leq d_{1}-1, d_{1}\leq j\leq d_{2}-1$.

Then  we consider Eq.~\eqref{eq:d1d2B} for  the states $|\phi_{j}\rangle$ and $|\phi_{j'}\rangle$  for $d_1\leq j<j'\leq d_2-1.$ That is, we have the equation  $\langle 0|I_{A}|0\rangle \langle j|E_2|j'\rangle-\langle 0|I_{1}|2\rangle \langle j|E_2|j'-1\rangle-\langle 2|I_{1}|0\rangle \langle j-1|E_2|j'\rangle+\langle 2|I_{1}|2\rangle \langle j-1|E_2|j'-1\rangle=0$, which implies that $m^2_{j,j'}=-m^2_{(j-1),(j'-1)}.$  Therefore, $$m^2_{j,j'}=-m^2_{(j-1),(j'-1)}= \cdots=(-1)^{j-d_{1}+1}m^2_{d_{1}-1,(j'-j+d_{1}-1)}$$
which is equal to zero as the last term $m^2_{d_{1}-1,(j'-j+d_{1}-1)}=0$ has been obtained.
Thus we get $m^2_{j',j}=m^2_{j,j'}=0$ for $d_{1}\leq j<j'\leq d_{2}-1$. Up to now, we have shown that the off-diagonal entries of  $E_2$ are all zeros.

For $1\leq i \leq d_1-1$, considering Eq.~\eqref{eq:ddB} for the states  $|S\rangle$ and $|\phi_{i}\rangle$, we   get $m^2_{i,i}=m^2_{0,0}$   by Lemma~\ref{lem:Dia}. Similarly, considering the states $|S\rangle$ and $|\phi_{j}\rangle$ for $d_{1}\leq j\leq d_{2}-1$, we have $m^2_{j,j}= m^2_{j-1,j-1}$ which implies that $$m^2_{j,j}= m^2_{j-1,j-1}=\cdots=m^2_{d_1-1,d_1-1}=m^2_{0,0}.$$
Therefore, $E_{2}\propto \mathbb {I}$. Bob cannot start with a nontrivial measurement either.

In summary, both participants can only start with a trivial orthogonality-preserving measurement. Thus the above $d_{2}+1$ states are locally stable. This completes the proof.
\end{proof}

\section{Constructions In Multipartite Quantum Systems}
In this section, we put forward the constructions of the locally stable sets in multipartite quantum systems $(\mathbb{C}^{d})^{\otimes n}$ $(d\geq 2, n\geq 3)$ and $\otimes^{n}_{i=1}\mathbb{C}^{d_{i}}$ $(3\leq d_{1}\leq d_{2}\leq\cdots\leq d_{n},~n\geq 3)$.
\subsection{locally stable set in $(\mathbb{C}^{d})^{\otimes n}$}

\begin{theorem}\label{the:$d+1$}  In $(\mathbb{C}^{d})^{\otimes n}$ $(d\geq 2, n\geq 3)$, the following set $\mathcal{S}$ of $d+1$ orthogonal states are locally stable [see Fig. \ref{Fig:triddd} for an intuition of the example where $n=3$ and $d=5$]:
\begin{widetext}
\begin{equation}\label{eq:samed}
\begin{aligned}
|\phi_{0}\rangle & =|00\cdots00\rangle_{A_{1}A_{2}\cdots A_{n}}-|11\cdots11\rangle_{A_{1}A_{2}\cdots A_{n}}, \\
|\phi_{i}\rangle&=|i0\cdots00\rangle_{A_{1}A_{2}\cdots A_{n}}+\omega_{n}|0i\cdots00\rangle_{A_{1}A_{2}\cdots A_{n}}+\cdots+\omega_{n}^{n-1}|00\cdots0i\rangle_{A_{1}A_{2}\cdots A_{n}},\\
|S\rangle & =|0+\cdots+(d-1)\rangle_{A_{1}}|0+\cdots+(d-1)\rangle_{A_{2}}\cdots|0+\cdots+(d-1)\rangle_{A_{n}},
\end{aligned}
\end{equation}
\end{widetext}
where $1\leq i\leq d-1$, $d\geq 2$.
\end{theorem}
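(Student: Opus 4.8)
The plan is to fix an arbitrary site $A_k$ and show that any element $E_k=(m^k_{a,b})_{a,b\in\mathbb{Z}_d}$ of an orthogonality-preserving measurement on $A_k$ is forced to be proportional to the identity; since $k$ is arbitrary, local stability of the set in Eq.~\eqref{eq:samed} then follows from Definition~\ref{def:stable}. Write $|\phi_i\rangle=\sum_{t=0}^{n-1}\omega_n^{t}|e_t(i)\rangle$ for $1\le i\le d-1$, where $|e_t(i)\rangle$ denotes the fully product term carrying $|i\rangle$ on $A_{t+1}$ and $|0\rangle$ on every other site, and recall $|\phi_0\rangle=|0\cdots0\rangle-|1\cdots1\rangle$. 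Inside each state the product terms are pairwise orthogonal, and only the supports of these terms, not the particular powers of $\omega_n$, matter for Lemmas~\ref{lem:zero} and~\ref{lem:Dia}; so the proof follows the standard two-stage template: first show all off-diagonal entries of $E_k$ vanish, then show all diagonal entries coincide.

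\emph{Stage 1 (off-diagonal entries).} For $1\le i\neq j\le d-1$, apply Eq.~\eqref{eq:orthogonal} to $|\phi_i\rangle,|\phi_j\rangle$: a short inspection shows that $\prod_{\ell\neq k}\langle (e_t(i))_\ell|(e_s(j))_\ell\rangle\neq0$ forces $t+1=s+1=k$, because any other choice leaves an unmatched coordinate ($i$, $j$, or a conflicting $0$) on some site $\ell\neq k$; hence $(t_0,s_0)=(k-1,k-1)$ is the unique surviving pair and Lemma~\ref{lem:zero} gives $m^k_{i,j}=m^k_{j,i}=0$. For $1\le i\le d-1$, apply Eq.~\eqref{eq:orthogonal} to $|\phi_0\rangle,|\phi_i\rangle$: the all-zeros term of $|\phi_0\rangle$ has nonzero residual overlap only with $|e_{k-1}(i)\rangle$, while the all-ones term $|1\cdots1\rangle$ has \emph{no} nonzero residual overlap at all, since with $n\ge 3$ there is always a site $\ell\notin\{k,s+1\}$ on which $|1\cdots1\rangle$ carries $|1\rangle$ and $|e_s(i)\rangle$ carries $|0\rangle$. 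Thus $(t_0,s_0)=(0,k-1)$ is again the unique pair and Lemma~\ref{lem:zero} yields $m^k_{0,i}=m^k_{i,0}=0$. Together these two families cover every off-diagonal entry of $E_k$ (for $d=2$ only the second family is needed).

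\emph{Stage 2 (diagonal entries).} Now that $E_k$ is diagonal, apply Eq.~\eqref{eq:orthogonal} to the stopper state $|S\rangle$ of Eq.~\eqref{eq:stopper} and each $|\phi_i\rangle$ with $1\le i\le d-1$: the $k$-th coordinates of the terms of $|\phi_i\rangle$ take exactly the two values $0$ and $i$, so Lemma~\ref{lem:Dia} gives $m^k_{0,0}=m^k_{i,i}$. Letting $i$ run over $1,\dots,d-1$ shows $E_k\propto I_d$. Since $k$ was arbitrary, no party can perform a nontrivial orthogonality-preserving measurement, so the set is locally stable; and as $|\mathcal{S}|=d+1=\max_i\{d_i+1\}$, it attains the lower bound of Ref.~\cite{Li2022}.

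The only genuinely delicate point is the bookkeeping in Stage~1: one must verify that for each relevant pair of states there is \emph{exactly one} index pair with nonvanishing residual overlap, and in particular that the $|1\cdots1\rangle$ branch of $|\phi_0\rangle$ never contributes --- this is precisely where the hypothesis $n\ge 3$ enters. Everything else is a direct invocation of Lemmas~\ref{lem:zero} and~\ref{lem:Dia}.
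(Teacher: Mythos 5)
Your proof is correct and follows essentially the same route as the paper: Lemma~\ref{lem:zero} applied to the pairs $|\phi_i\rangle,|\phi_j\rangle$ ($0\le i\neq j\le d-1$) kills the off-diagonal entries, and Lemma~\ref{lem:Dia} applied to $|S\rangle,|\phi_i\rangle$ equalizes the diagonal. The only cosmetic difference is that the paper invokes the cyclic symmetry of the set to treat only the first party, while you work with an arbitrary site $k$ and spell out the uniqueness check in Lemma~\ref{lem:zero} (including why the $|1\cdots1\rangle$ branch of $|\phi_0\rangle$ never contributes for $n\ge 3$), which the paper leaves implicit.
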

	
\begin{figure}[h]
	\centering
	\includegraphics[scale=0.64]{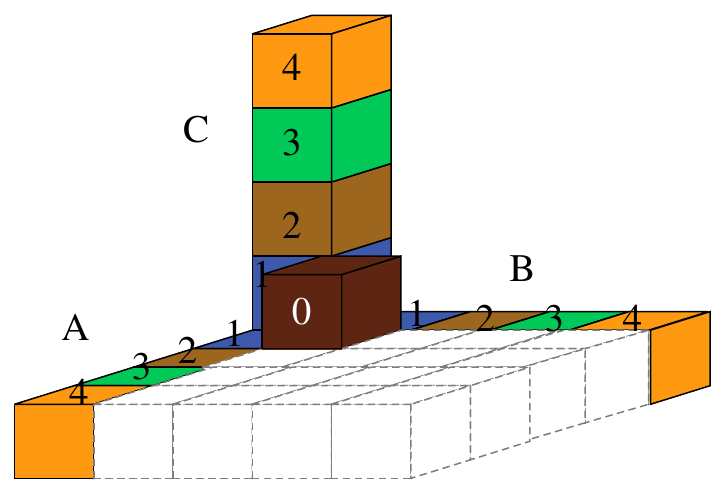}
	\caption{Intuition of the structure of states we constructed in Eq.~\eqref{eq:samed} for the setting $n=3$ and $d=5.$  Note that the cubic with coordinate $(0,0,0)$ should be labeled with $``0"$. The squares indicated by the same color represent a unique state and the numbers represent the subscripts of the states. For example, the orange squares (4,0,0), (0,4,0), and (0,0,4) with label $``4"$ correspond to the state $|\phi_{4}\rangle=|400\rangle_{ABC}+\omega_{3}|040\rangle_{ABC}+\omega_{3}^{2}|004\rangle_{ABC}$.}\label{Fig:triddd}
\end{figure}

\begin{proof} Since the  states are symmetric, it is sufficient  to prove that  the first party  could only start with a trivial    orthogonality-preserving measurement. Let $E_1=(m^1_{a,b})_{a,b\in\mathbb{Z}_d}$ represent  an element of any orthogonality-preserving measurement performed by Alice. For  each pair $|\psi\rangle,|\phi\rangle\in \mathcal{S}$  with $|\psi\rangle\neq |\phi\rangle$, we have
	\begin{equation}\label{eq:ddddA}
		\langle \psi| E_1\otimes I_2\otimes \cdots \otimes I_n|\phi\rangle=0.
	\end{equation}

For $0\leq i\neq j\leq d-1, $ considering Eq.~\eqref{eq:ddddA} for the states $|\phi_{i}\rangle$ and $|\phi_{j}\rangle$, we obtain $m^1_{i,j}= m^1_{j,i}= 0$ by Lemma \ref{lem:zero}. Therefore, the off-diagonal elements of  $E_1$  are all zeros. Considering Eq.~\eqref{eq:ddddA} for   the states $|S\rangle$ and $|\phi_{i}\rangle$, we directly get $m^1_{i,i}= m^1_{0,0}$ for $1\leq i\leq d-1$ by Lemma~\ref{lem:Dia}.
Therefore, $E_{1} $ is proportional to the identity matrix. So, the first party cannot start with a nontrivial orthogonality-preserving measurement.

 Therefore, the above $d+1$ states are locally stable by definition. This completes the proof.
\end{proof}

Next, we consider the constructions of locally stable sets in the general multipartite quantum systems. In order to be better understood, we first show our construction in arbitrary tripartite quantum systems.
\subsection{locally stable set in $\mathbb{C}^{d_{1}}\otimes \mathbb{C}^{d_{2}}\otimes \mathbb{C}^{d_{3}}$}

\begin{theorem}\label{the:d1d2d3} In $\mathbb{C}^{d_{1}}\otimes \mathbb{C}^{d_{2}}\otimes \mathbb{C}^{d_{3}}$ ($3\leq d_{1}\leq d_{2}\leq d_{3}$), the following set of $d_{3}+1$ orthogonal states is locally stable  [see Fig. \ref{Fig:trid1d2d3}   for an intuition of the example where $d_1=5,d_2=7$ and $d_3=10$]:
\begin{widetext}
\begin{equation}\label{eq:trid1d2d3}
\begin{aligned}
|\phi_{0}\rangle & =|000\rangle_{ABC}-|111\rangle_{ABC}, \\
|\phi_{i}\rangle&=|i00\rangle_{ABC}+\omega_{3}|0i0\rangle_{ABC}+\omega_{3}^{2}|00i\rangle_{ABC},~~~1\leq i\leq d_{1}-1\\
|\phi_{j}\rangle&=|0j0\rangle_{ABC}-|00j\rangle_{ABC},~~~d_{1}\leq j\leq d_{2}-1\\
|\phi_{k}\rangle&=|00k\rangle_{ABC}-|21(k-1)\rangle_{ABC},~~~d_{2}\leq k\leq d_{3}-1, \\
|S\rangle & =|0+\cdots+(d_{1}-1)\rangle_{A}|0+\cdots+(d_{2}-1)\rangle_{B}|0+\cdots+(d_{3}-1)\rangle_{C}.\\
\end{aligned}
\end{equation}
\end{widetext}

\end{theorem}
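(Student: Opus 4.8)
The plan is to follow the same two-lemma strategy already used in Theorems~\ref{the:$d+1$} and \ref{the:$d_{2}+1$}, now carried out for each of the three parties $A$, $B$, $C$ in turn. For each party we assume it starts with an orthogonality-preserving measurement element $E_k=(m^k_{a,b})$, write down the orthogonality conditions $\langle\phi|\,E_k\,|\phi'\rangle=0$ for all pairs in $\mathcal{S}$, first kill all off-diagonal entries of $E_k$, then use the stopper state $|S\rangle$ together with Lemma~\ref{lem:Dia} to force all diagonal entries equal, concluding $E_k\propto\mathbb{I}$.

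For party $A$: note that the $A$-labels appearing in the states are only $0,1,2$ (indices $i$ for $1\le i\le d_1-1$ live on $A$ only in the $|i00\rangle$ terms, while $|\phi_j\rangle,|\phi_k\rangle$ use $A$-labels $0,2$), so $E_1$ is a $d_1\times d_1$ matrix. I would first compare $|\phi_i\rangle$ with $|\phi_{i'}\rangle$ ($1\le i\ne i'\le d_1-1$) to get $m^1_{i,i'}=m^1_{i',i}=0$ via Lemma~\ref{lem:zero}; then $|\phi_0\rangle$ with $|\phi_i\rangle$; then $|\phi_i\rangle$ with $|\phi_j\rangle$ and with $|\phi_k\rangle$; the only delicate pairs (where two surviving cross-terms appear) are handled exactly as the ``$i=2$'' sub-case in Theorem~\ref{the:$d_{2}+1$}, using a previously-derived zero entry to cancel one term. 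This shows $E_1$ is diagonal, and then $|S\rangle$ versus each $|\phi_i\rangle$ ($1\le i\le d_1-1$) gives $m^1_{i,i}=m^1_{0,0}$ via Lemma~\ref{lem:Dia}, so $E_1\propto\mathbb{I}$. Party $B$ is essentially identical in flavor: $B$-labels run over $0,1,2$ plus the indices $j$ in $|\phi_j\rangle$ and $1$ in the last block, so $E_2$ is $d_2\times d_2$; comparisons among the $|\phi_j\rangle$'s and with $|\phi_k\rangle$'s kill the off-diagonals (with the chained cancellation $m^2_{j,j'}=-m^2_{j-1,j'-1}$ telescoping to a known zero, just as in Theorem~\ref{the:$d_{2}+1$}), and $|S\rangle$ pins the diagonal entries equal down the chain $m^2_{j,j}=m^2_{j-1,j-1}=\cdots=m^2_{0,0}$.

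Party $C$ is where the $d_3+1$ states really get used: the $C$-labels span all of $\mathbb{Z}_{d_3}$ (via the $|00k\rangle$ terms for $d_2\le k\le d_3-1$ and $|00j\rangle$ for $d_1\le j\le d_2-1$), so $E_3$ is genuinely $d_3\times d_3$. I would again run Lemma~\ref{lem:zero} over all pairs to zero the off-diagonals — the crucial chained relations being, from $|\phi_k\rangle$ vs $|\phi_{k'}\rangle$, something like $m^3_{k,k'}=-m^3_{k-1,k'-1}$, which telescopes back into the block $d_2\le\cdot\le d_3-1$ and then connects (via the mixed $|\phi_j\rangle$ vs $|\phi_k\rangle$ comparisons) to entries already shown to vanish — and then use $|S\rangle$ vs each $|\phi_k\rangle$ to get $m^3_{k,k}=m^3_{k-1,k-1}$, telescoping all the way down to $m^3_{0,0}$. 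The main obstacle, as in the bipartite case, is the bookkeeping for party $C$: one must be careful that the telescoping chains $m^3_{k,k'}=\pm m^3_{k-1,k'-1}$ actually terminate at an index pair already known to be zero (rather than running out of the index range prematurely), which relies on the specific offsets $(k,k-1)$ and $(j,j)$ chosen in \eqref{eq:trid1d2d3}; verifying that the union of ``forbidden'' overlaps from $|\phi_0\rangle$, the $|\phi_i\rangle$, the $|\phi_j\rangle$ and the $|\phi_k\rangle$ blocks covers every off-diagonal position of $E_3$ is the real content. Once $E_1,E_2,E_3$ are each shown proportional to the identity, local stability follows directly from Definition~\ref{def:stable}.
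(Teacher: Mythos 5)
Your proposal follows the paper's proof exactly: Lemma~\ref{lem:zero} to kill the off-diagonal entries of each $E_k$, with the telescoping chain $m^3_{k,k'}=-m^3_{k-1,k'-1}$ terminating at the already-vanishing entry $m^3_{d_2-1,\,k'-k+d_2-1}$ for party $C$, then $|S\rangle$ together with Lemma~\ref{lem:Dia} for the diagonals; you also correctly single out party $C$ as the only place where termination of the chains needs checking. One small correction: parties $A$ and $B$ are simpler than you describe, because $|\phi_j\rangle=|0j0\rangle_{ABC}-|00j\rangle_{ABC}$ carries no $(j-1)$ offset, so every relevant pair for $E_1$ and $E_2$ has at most one surviving cross-term and Lemma~\ref{lem:zero} applies directly --- there is no ``$i=2$''-type cancellation for $A$ and no relation $m^2_{j,j'}=-m^2_{j-1,j'-1}$ for $B$; those devices are needed only for $E_3$ on the block $d_2\le k<k'\le d_3-1$.
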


	\begin{figure}[h]
	\centering
	\includegraphics[scale=0.64]{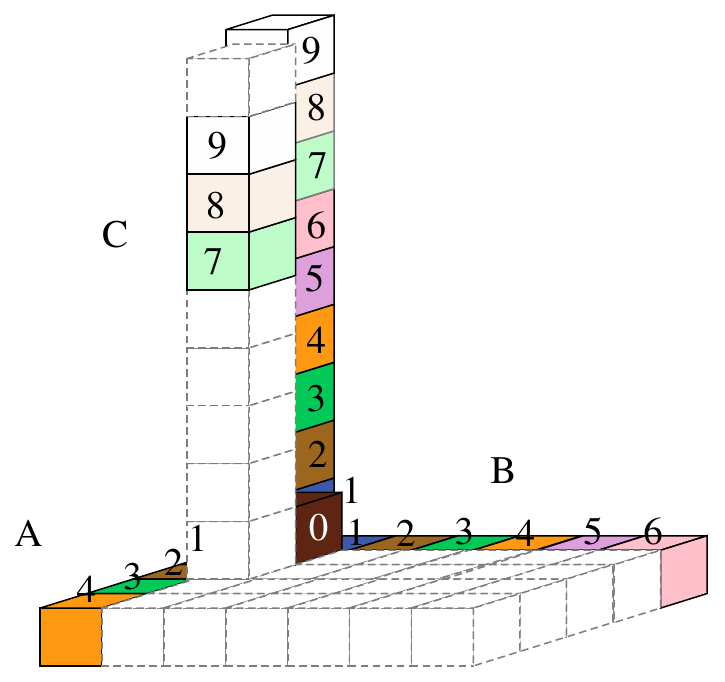}
		\caption{Intuition of the structure of states we constructed in Eq.~\eqref{eq:trid1d2d3} for the setting $d_1=5,d_2=7,d_3=10.$  Note that the cubic with coordinate $(0,0,0)$ should be labeled with $``0"$. The squares indicated by the same color represent a unique state and the numbers represent the subscripts of the states. For example, the two pink squares (0,6,0) and (0,0,6) with label $``6"$ correspond to the state $|\phi_{6}\rangle=|060\rangle_{ABC}-|006\rangle_{ABC}$; the two white squares (0,0,9) and (2,1,8) with label $``9"$ correspond to the state $|\phi_{9}\rangle=|009\rangle_{ABC}-|218\rangle_{ABC}$.}\label{Fig:trid1d2d3}
\end{figure}
\begin{proof} As far as Alice is concerned, it is the same as the equal dimensional case. We only need to prove that Bob and Charlie have to implement trivial measurement.

 Suppose that Bob starts with the first orthogonality-preserving measurement whose elements are  represented as $E_2=(m^2_{a,b})_{a,b\in\mathbb{Z}_{d_2}}.$  Then for each pair $|\psi\rangle,|\phi\rangle\in \mathcal{S}$  with $|\psi\rangle\neq |\phi\rangle$, we have
\begin{equation}\label{eq:d1d2d3B}
	\langle \psi| I_1\otimes E_2\otimes I_3|\phi\rangle=0.
\end{equation}

 Now we consider Eq.~\eqref{eq:d1d2d3B} for  the states $|\phi_{i}\rangle$ and $|\phi_{j}\rangle$  for $0\leq i\neq j\leq d_2-1.$
Note that the $AC$ parties of each term of  $|\phi_{i}\rangle$ and $|\phi_{j}\rangle$ are orthogonal except the terms $|0i0\rangle$ (corresponding to $|\psi_i\rangle$) and  $|0j0\rangle$ (corresponding to $|\psi_j\rangle$).
Therefore, by Lemma \ref{lem:zero}, we could obtain that  $m^2_{i,j}=m^2_{j,i}=0.$  Therefore,   the off-diagonal entries of  $E_2$ are all zeros.


For $1\leq i \leq d_2-1$, considering  Eq.~\eqref{eq:d1d2d3B} for the states $|S\rangle$ and $|\phi_{i}\rangle$, we get $ m^2_{i,i}=m^2_{0,0}$   by Lemma~\ref{lem:Dia}.
Therefore, $E_{2}\propto \mathbb {I}$. Bob cannot start with a nontrivial measurement either.

Let us consider the third party Charlie.
 Suppose that Charlie starts with the first orthogonality-preserving measurement whose elements are  represented as $E_3=(m^3_{a,b})_{a,b\in\mathbb{Z}_{d_3}}.$  Then for  each pair $|\psi\rangle,|\phi\rangle\in \mathcal{S}$  with $|\psi\rangle\neq |\phi\rangle$, we have
 \begin{equation}\label{eq:d1d2d3C}
 	\langle \psi| I_1\otimes I_2\otimes E_3|\phi\rangle=0.
 \end{equation}
Considering Eq.~\eqref{eq:d1d2d3C}  for the pair $|\phi_i\rangle$ and $|\phi_j\rangle$ where $0\leq i\neq j\leq d_3-1$, by  Lemma~\ref{lem:zero}, we could obtain that the off diagonal  entry $m^3_{i,j}$ of the matrix $E_{3}$ is zero except $d_2\leq i\neq j\leq d_3-1$ (see  Table \ref{3_result}).

\begin{table}[htbp]
	\newcommand{\tabincell}[2]{\begin{tabular}{@{}#1@{}}#2\end{tabular}}
	\centering
	\caption{\label{3_result}Zero entries of the matrix $E_3=(m^3_{a,b})_{a,b\in\mathbb{Z}_{d_3}}$.}
	\begin{tabular}{ccc}
		\toprule
		\hline
		\hline
		\specialrule{0em}{1.5pt}{1.5pt}
		~A pair of states~~&~~~Zero entries~~~&~~~Value range~~~\\
		\specialrule{0em}{1.5pt}{1.5pt}
		\midrule
		\hline
		\specialrule{0em}{1.5pt}{1.5pt}
		\tabincell{c}{$|\phi_{0}\rangle$,~$|\phi_{i}\rangle$}&\tabincell{c}{$m^3_{0,i}= m^3_{i,0}= 0$}&\tabincell{c}{$1\leq i\leq d_{1}-1$} \\
		\specialrule{0em}{3.5pt}{3.5pt}
		\tabincell{c}{$|\phi_{0}\rangle$,~$|\phi_{j}\rangle$}&\tabincell{c}{$m^3_{0,j}= m^3_{j,0}= 0$}&\tabincell{c}{$d_{1}\leq j\leq d_{2}-1$} \\
		\specialrule{0em}{3.5pt}{3.5pt}
		\tabincell{c}{$|\phi_{0}\rangle$,~$|\phi_{k}\rangle$}&\tabincell{c}{$m^3_{0,k}= m^3_{k,0}= 0$}&\tabincell{c}{$d_{2}\leq k\leq d_{3}-1$} \\
		\specialrule{0em}{3.5pt}{3.5pt}
		\tabincell{c}{$|\phi_{i}\rangle$,~$|\phi_{i'}\rangle$}&\tabincell{c}{$m^3_{i,i'}= m^3_{i',i}= 0$}&~~\tabincell{c}{$1\leq i\neq i'\leq d_{1}-1$} \\
		\specialrule{0em}{3.5pt}{3.5pt}
		\tabincell{c}{$|\phi_{j}\rangle$,~$|\phi_{j'}\rangle$}&\tabincell{c}{$m^3_{j,j'}= m^3_{j',j}= 0$}&~~\tabincell{c}{$d_{1}\leq j\neq j'\leq d_{2}-1$} \\
		\specialrule{0em}{3.5pt}{3.5pt}
		\tabincell{c}{$|\phi_{i}\rangle$,~$|\phi_{j}\rangle$}&\tabincell{c}{$m^3_{i,j}= m^3_{j,i}= 0$}&\tabincell{c}{$1\leq i \leq d_{1}-1$,\\$d_{1}\leq j \leq d_{2}-1$} \\
		\specialrule{0em}{3.5pt}{3.5pt}
		\tabincell{c}{$|\phi_{i}\rangle$,~$|\phi_{k}\rangle$}&\tabincell{c}{$m^3_{i,k}= m^3_{k,i}= 0$}&\tabincell{c}{$1\leq i \leq d_{1}-1$\\$d_{2}\leq k \leq d_{3}-1$} \\
		\specialrule{0em}{3.5pt}{3.5pt}
		\tabincell{c}{$|\phi_{j}\rangle$,~$|\phi_{k}\rangle$}&\tabincell{c}{$m^3_{j,k}= m^3_{k,j}= 0$}&\tabincell{c}{$d_{1}\leq j \leq d_{2}-1$\\$d_{2}\leq k \leq d_{3}-1$} \\
		\bottomrule
		\specialrule{0em}{1.5pt}{1.5pt}
		\hline
		\hline
	\end{tabular}
\end{table}

Now we consider  Eq.~\eqref{eq:d1d2d3C}  for the pair $|\phi_k\rangle$ and $|\phi_{k'}\rangle$ where $d_2\leq k<k' \leq d_3-1. $  That is,  $$[\langle 00k|-\langle 21(k-1)|]I_{1}\otimes I_{2}\otimes E_{3}[|00k'\rangle -|21(k'-1)\rangle]=0$$
from which we deduce that
  $m^3_{k,k'}=-m^3_{(k-1),(k'-1)}.$
  Therefore, we have
 $$m^3_{k,k'}=-m^3_{(k-1),(k'-1)}= \cdots=(-1)^{k-d_{2}+1}m^3_{d_{2}-1,(k'-k+d_{2}-1)},$$
 which is equal to zero as the last term $m^3_{d_{2}-1,(k'-k+d_{2}-1)}=0$ has been obtained.
Thus we get $m^3_{k,k'}=m^3_{k',k}=0.$

\begin{table}[htbp]
	\newcommand{\tabincell}[2]{\begin{tabular}{@{}#1@{}}#2\end{tabular}}
	\centering
	\caption{\label{4_result}~Diagonal entries of $E_3=(m^3_{a,b})_{a,b\in\mathbb{Z}_{d_3}}$.}
	\begin{tabular}{ccc}
		\toprule
		\hline
		\hline
		\specialrule{0em}{1.5pt}{1.5pt}
		A pair of states~~&~~Diagonal entries~&~Value Range\\
		\specialrule{0em}{1.5pt}{1.5pt}
		\midrule
		\hline
		\specialrule{0em}{1.5pt}{1.5pt}
		\tabincell{c}{$|S\rangle$,~$|\phi_{i}\rangle$}&\tabincell{c}{$m^3_{0,0}= m^3_{i,i}$}&~~\tabincell{c}{$1\leq i\leq d_{1}-1$} \\
		\specialrule{0em}{3.5pt}{3.5pt}
		\tabincell{c}{$|S\rangle$,~$|\phi_{j}\rangle$}&\tabincell{c}{$m^3_{0,0}= m^3_{j,j}$}&~~\tabincell{c}{$d_{1}\leq j \leq d_{2}-1$} \\
		\specialrule{0em}{3.5pt}{3.5pt}
		\tabincell{c}{$|S\rangle$,~$|\phi_{k}\rangle$}&\tabincell{c}{$m^3_{k,k}= m^3_{(k-1),(k-1)}$}&~~\tabincell{c}{$d_{2}\leq k \leq d_{3}-1$} \\
		\bottomrule
		\specialrule{0em}{1.5pt}{1.5pt}
		\hline
		\hline
	\end{tabular}
\end{table}

By Lemma~\ref{lem:Dia}, all diagonal entries of the matrix $E_{3}$ are equal from Table \ref{4_result}, i.e., $E_{3}\propto \mathbb {I}$. Charlie cannot start with a nontrivial measurement.

In summary, all the subsystems can only start with a trivial orthogonality-preserving measurement. Therefore, the above $d_{3}+1$ states form a locally stable set. This completes the proof.
\end{proof}

\subsection{locally stable set in $\otimes^{n}_{i=1}\mathbb{C}^{d_{i}}$}

\begin{table*}
	\newcommand{\tabincell}[2]{\begin{tabular}{@{}#1@{}}#2\end{tabular}}
	\centering
	\caption{\label{5_result} Zero entries of the matrix $E_{n}=(m_{a,b}^{n})_{a,b\in \mathbb Z_{d_{n}}}$.}
	\begin{tabular}{cccc}
		\toprule
		\hline
		\hline
		\specialrule{0em}{1.5pt}{1.5pt}
		~~~~~~~~~~~A pair of states ~~~~~~~~~~~~~&~~~~~~~~~~~~~~Zero entries~~~~~~~~~~~~~~~~~~&~~~~~~~~Value Range ~~~~~~~\\
		\specialrule{0em}{1.5pt}{1.5pt}
		\midrule
		\hline
		\specialrule{0em}{1.5pt}{1.5pt}
		\tabincell{c}{$|\phi_{0}\rangle$,~$|\phi_{i_{1}}\rangle$}&\tabincell{c}{$m_{0,i_{1}}^{n}= m_{i_{1},0}^{n}= 0$}&\tabincell{c}{$1\leq i_{1}\leq d_{1}-1$  } \\
		
		\tabincell{c}{\vdots}&\tabincell{c}{\vdots}&\tabincell{c}{\vdots}\\
		
		\tabincell{c}{$|\phi_{0}\rangle$,~$|\phi_{i_{n-1}}\rangle$}&\tabincell{c}{$m_{0,i_{n-1}}^{n}= m_{i_{n-1},0}^{n}= 0$}&\tabincell{c}{$d_{n-2}\leq i_{n-1}\leq d_{n-1}-1$  } \\
		\specialrule{0em}{3.5pt}{3.5pt}
		\tabincell{c}{$|\phi_{0}\rangle$,~$|\phi_{i_{n}}\rangle$}&\tabincell{c}{$m_{0,i_{n}}^{n}= m_{i_{n},0}^{n}= 0$}&\tabincell{c}{$d_{n-1}\leq i_{n}\leq d_{n}-1$} \\
		\specialrule{0em}{1.5pt}{1.5pt}
		\midrule
		\hline
		\specialrule{0em}{3.5pt}{3.5pt}
		\tabincell{c}{$|\phi_{i_{1}}\rangle$,~$|\phi_{i_{1}}'\rangle$}&\tabincell{c}{$m_{i_{1},i_{1}^{'}}^{n}= m_{i_{1}^{'},i_{1}}^{n}= 0$}&\tabincell{c}{$1\leq i_{1}\neq i_{1}^{'}\leq d_{1}-1$  } \\
		
		\tabincell{c}{\vdots}&\tabincell{c}{\vdots}&\tabincell{c}{\vdots} \\
		
		\tabincell{c}{$|\phi_{i_{n-2}}\rangle$,~$|\phi_{i_{n-2}}'\rangle$}&\tabincell{c}{$m_{i_{n-2},i_{n-2}^{'}}^{n}= m_{i_{n-2}^{'},i_{n-2}}^{n}= 0$}&\tabincell{c}{$d_{n-3}\leq i_{n-2}\neq i_{n-2}^{'}\leq d_{n-2}-1$  }\\
		\specialrule{0em}{3.5pt}{3.5pt}
		\tabincell{c}{$|\phi_{i_{n-1}}\rangle$,~$|\phi_{i_{n-1}}'\rangle$}&\tabincell{c}{$m_{i_{n-1},i_{n-1}^{'}}^{n}= m_{i_{n-1}^{'},i_{n-1}}^{n}= 0$}&\tabincell{c}{$d_{n-2}\leq i_{n-1}\neq i_{n-1}^{'}\leq d_{n-1}-1$ }\\
		\specialrule{0em}{1.5pt}{1.5pt}
		\midrule
		\hline
		\specialrule{0em}{3.5pt}{3.5pt}
		\tabincell{c}{$|\phi_{i_{1}}\rangle$,~$|\phi_{i_{2}}\rangle$}&\tabincell{c}{$m_{i_{1},i_{2}}^{n}= m_{i_{2},i_{1}}^{n}= 0$}&\tabincell{c}{$1\leq i_{1}\leq d_{1}-1$,~$d_{1}\leq i_{2}\leq d_{2}-1$  } \\
		
		\tabincell{c}{\vdots}&\tabincell{c}{\vdots}&\tabincell{c}{\vdots}\\
		
		\tabincell{c}{$|\phi_{i_{1}}\rangle$,~$|\phi_{i_{n-1}}\rangle$}&\tabincell{c}{$m_{i_{1},i_{n-1}}^{n}= m_{i_{n-1},i_{1}}^{n}= 0$}&\tabincell{c}{$1\leq i_{1}\leq d_{1}-1$,~$d_{n-2}\leq i_{n-1}\leq d_{n-1}-1$  } \\
		\specialrule{0em}{3.5pt}{3.5pt}
		\tabincell{c}{$|\phi_{i_{1}}\rangle$,~$|\phi_{i_{n}}\rangle$}&\tabincell{c}{$m_{i_{1},i_{n}}^{n}= m_{i_{n},i_{1}}^{n}= 0$}&\tabincell{c}{$1\leq i_{1}\leq d_{1}-1,$ $d_{n-1}\leq i_{n}\leq d_{n}-1$} \\
		\specialrule{0em}{3.5pt}{3.5pt}
		\tabincell{c}{$|\phi_{i_{2}}\rangle$,~$|\phi_{i_{3}}\rangle$}&\tabincell{c}{$m_{i_{2},i_{3}}^{n}= m_{i_{3},i_{2}}^{n}= 0$}&\tabincell{c}{$d_{1}\leq i_{2}\leq d_{2}-1$,~$d_{2}\leq i_{3}\leq d_{3}-1$  } \\
		
		\tabincell{c}{\vdots}&\tabincell{c}{\vdots}&\tabincell{c}{\vdots} \\
		
		\tabincell{c}{$|\phi_{i_{2}}\rangle$,~$|\phi_{i_{n-1}}\rangle$}&\tabincell{c}{$m_{i_{2},i_{n-1}}^{n}= m_{i_{n-1},i_{2}}^{n}= 0$}&\tabincell{c}{$d_{1}\leq i_{2}\leq d_{2}-1$,~$d_{n-2}\leq i_{n-1}\leq d_{n-1}-1$ } \\
		\specialrule{0em}{3.5pt}{3.5pt}
		\tabincell{c}{$|\phi_{i_{2}}\rangle$,~$|\phi_{i_{n}}\rangle$}&\tabincell{c}{$m_{i_{2},i_{n}}^{n}= m_{i_{n},i_{2}}^{n}= 0$}&\tabincell{c}{$d_{1}\leq i_{2}\leq d_{2}-1,$ $d_{n-1}\leq i_{n}\leq d_{n}-1$} \\
		
		\tabincell{c}{\vdots}&\tabincell{c}{\vdots}&\tabincell{c}{\vdots}\\
		
		\tabincell{c}{$|\phi_{i_{n-1}}\rangle$,~$|\phi_{i_{n}}\rangle$}&\tabincell{c}{$m_{i_{n-1},i_{n}}^{n}= m_{i_{n},i_{n-1}}^{n}= 0$}&\tabincell{c}{$d_{n-2}\leq i_{n-1}\leq d_{n-1}-1,$ $d_{n-1}\leq i_{n}\leq d_{n}-1$} \\
		\bottomrule
		\specialrule{0em}{1.5pt}{1.5pt}
		\hline
		\hline
	\end{tabular}
\end{table*}

\begin{theorem}
\label{the:dn} The following set $\mathcal{S}$ of $d_{n}+1$ orthogonal states are locally stable in $\otimes^{n}_{i=1}\mathbb{C}^{d_{i}}$ for $3\leq d_{1}\leq d_{2}\leq\cdots\leq d_{n}$ and ~$n\geq 3$:
\begin{widetext}
\begin{equation}\label{eq:mulpro}
\begin{aligned}
|\phi_{0}\rangle &=|00\cdots00\rangle_{A_{1}A_{2}\cdots A_{n}}-|11\cdots11\rangle_{A_{1}A_{2}\cdots A_{n}}, \\
|\phi_{i_{1}}\rangle&=|i_{1}0\cdots00\rangle_{A_{1}A_{2}\cdots A_{n}}+\omega_{n}|0i_{1}0\cdots00\rangle_{A_{1}A_{2}\cdots A_{n}}+\cdots+\omega_{n}^{n-2}|00\cdots0i_{1}0\rangle_{A_{1}A_{2}\cdots A_{n}}\\
&~~~~+\omega_{n}^{n-1}|00\cdots0i_{1}\rangle_{A_{1}A_{2}\cdots A_{n}},~1\leq i_{1}\leq d_{1}-1\\
|\phi_{i_{2}}\rangle&=|0i_{2}0\cdots0\rangle_{A_{1}A_{2}\cdots A_{n}}+\omega_{n-1}|00i_{2}0\cdots0\rangle_{A_{1}A_{2}\cdots A_{n}}+\cdots+\omega_{n-1}^{n-2}|0\cdots0i_{2}\rangle_{A_{1}A_{2}\cdots A_{n}},~d_{1}\leq i_{2}\leq d_{2}-1\\
|\phi_{i_{3}}\rangle&=|00i_{3}0\cdots0\rangle_{A_{1}A_{2}\cdots A_{n}}+\omega_{n-2}|000i_{3}0\cdots0\rangle_{A_{1}A_{2}\cdots A_{n}}+\cdots+\omega_{n-2}^{n-3}|0\cdots0i_{3}\rangle_{A_{1}A_{2}\cdots A_{n}},~d_{2}\leq i_{3}\leq d_{3}-1\\
& ~~~~~~~~~~~~~~~~~~\cdots~ \cdots ~\cdots~~~~~~~~~\\
|\phi_{i_{n-1}}\rangle&=|00\cdots0i_{n-1}0\rangle_{A_{1}A_{2}\cdots A_{n}}-|00\cdots0i_{n-1}\rangle_{A_{1}A_{2}\cdots A_{n}},~d_{n-2}\leq i_{n-1}\leq d_{n-1}-1\\
|\phi_{i_{n}}\rangle&=|00\cdots0i_{n}\rangle_{A_{1}A_{2}\cdots A_{n}}-|21\cdots1 (i_{n}-1)\rangle_{A_{1}A_{2}\cdots A_{n}},~d_{n-1}\leq i_{n}\leq d_{n}-1, \\
|S\rangle & =|0+\cdots+(d_{1}-1)\rangle_{A_{1}}|0+\cdots+(d_{2}-1)\rangle_{A_{2}}\cdots|0+\cdots+(d_{n}-1)\rangle_{A_{n}}.\\
\end{aligned}
\end{equation}
\end{widetext}

\end{theorem}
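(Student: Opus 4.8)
The plan is to follow the same structure as the proof of Theorem~\ref{the:d1d2d3}, generalizing from three parties to $n$ parties. Since the first $n-3$ parties $A_1,\dots,A_{n-2}$ play a role entirely analogous to Alice and Bob in the tripartite case — each being part of only ``product-type'' terms that are pairwise orthogonal on the remaining factors — I would first argue that for each $k\le n-2$, Lemma~\ref{lem:zero} forces all off-diagonal entries of $E_k$ to vanish (for any pair $|\phi_{i}\rangle,|\phi_{j}\rangle$ one checks that the tensor factors other than $A_k$ agree in at most one term), and then Lemma~\ref{lem:Dia} applied with the stopper state $|S\rangle$ forces all diagonal entries equal, so $E_k\propto\mathbb{I}$. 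The party $A_{n-1}$ is handled just like Bob in Theorem~\ref{the:d1d2d3}: the phase structure $\omega_m$ in the states $|\phi_{i_1}\rangle,\dots,|\phi_{i_{n-2}}\rangle$ and the simple $\pm$ form of $|\phi_{i_{n-1}}\rangle$ guarantee that for each relevant pair the $A_{n-1}$-complement factors are orthogonal except in a single term, so Lemma~\ref{lem:zero} kills the off-diagonals and Lemma~\ref{lem:Dia} with $|S\rangle$ equalizes the diagonals.

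The genuinely new work is the last party $A_n$, which generalizes Charlie. First I would tabulate (as in Table~\ref{5_result}) that for every pair $|\phi_{a}\rangle,|\phi_{b}\rangle$ except pairs $|\phi_{i_n}\rangle,|\phi_{i_n'}\rangle$ with $d_{n-1}\le i_n\neq i_n'\le d_n-1$, Lemma~\ref{lem:zero} forces the corresponding off-diagonal entry of $E_n$ to be zero — this is a case analysis over which ``block'' each index lives in, but each individual check is the single-nonzero-overlap condition of Lemma~\ref{lem:zero}. Then for the remaining pairs, applying Eq.~\eqref{eq:orthogonal} to $|\phi_{i_n}\rangle=|0\cdots0\,i_n\rangle-|21\cdots1(i_n-1)\rangle$ and $|\phi_{i_n'}\rangle=|0\cdots0\,i_n'\rangle-|21\cdots1(i_n'-1)\rangle$ — using that all already-proven-zero off-diagonals drop out — yields the recursion $m^n_{i_n,i_n'}=-m^n_{(i_n-1),(i_n'-1)}$; iterating down to $m^n_{d_{n-1}-1,\,i_n'-i_n+d_{n-1}-1}$, which lies in an already-handled block and hence is zero, we conclude $m^n_{i_n,i_n'}=0$. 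So all off-diagonals of $E_n$ vanish. Finally, Lemma~\ref{lem:Dia} applied to $|S\rangle$ with each $|\phi_{a}\rangle$ gives: for the ``$\omega$-type'' and ``$\pm$-type'' states, $m^n_{a,a}=m^n_{0,0}$ directly; for $|\phi_{i_n}\rangle$ it gives $m^n_{i_n,i_n}=m^n_{(i_n-1),(i_n-1)}$, and telescoping down to $m^n_{d_{n-1}-1,d_{n-1}-1}=m^n_{0,0}$ shows all diagonal entries coincide, so $E_n\propto\mathbb{I}$.

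Having shown every party can only start with a trivial orthogonality-preserving measurement, the set is locally stable by Definition~\ref{def:stable}, and since $|\mathcal{S}|=d_n+1=\max_i\{d_i+1\}$ it meets the lower bound of Ref.~\cite{Li2022}. I would also remark, as the paper does in its final sentence, that replacing $|\phi_0\rangle$ and the ``tail'' states by genuinely entangled variants gives the promised alternative construction consisting of genuine entangled states apart from $|S\rangle$.

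The main obstacle is purely organizational rather than conceptual: one must verify that in the large case analysis for $E_n$, \emph{every} off-diagonal entry outside the final block is eliminated before one invokes the telescoping recursion — i.e., the ordering of the argument matters, because the recursion $m^n_{i_n,i_n'}=-m^n_{(i_n-1),(i_n'-1)}$ is only useful once one knows it terminates at an entry already shown to be zero, and the equation producing this recursion must not still contain unknown off-diagonal terms. Getting the block bookkeeping (which index ranges correspond to which $\omega_m$-structured state) exactly right, so that the single-overlap hypothesis of Lemma~\ref{lem:zero} genuinely holds in each sub-case, is the delicate part; the tables (analogues of Tables~\ref{3_result} and~\ref{4_result}, collected in Table~\ref{5_result}) are what make this tractable.
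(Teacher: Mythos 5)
Your proposal is correct and follows essentially the same route as the paper's own proof: Lemma~\ref{lem:zero} eliminates all off-diagonal entries of $E_k$ for $k\leq n-1$ and all off-diagonal entries of $E_n$ outside the final block, the recursion $m^n_{i_n,i_n'}=-m^n_{(i_n-1),(i_n'-1)}$ telescopes the remaining block down to an already-vanished entry, and Lemma~\ref{lem:Dia} with the stopper state equalizes the diagonals (with the same telescoping for the last block). The only cosmetic difference is that you treat party $A_{n-1}$ separately from $A_1,\dots,A_{n-2}$, whereas the paper handles all of the first $n-1$ parties uniformly.
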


\begin{proof} First, we show that each of the first $(n-1)$ parties could only start with a trivial orthogonality-preserving measurement.
	 Suppose that the $k$th ($1\leq k\leq n-1$) party     starts with the first orthogonality-preserving measurement whose elements are  represented as $E_k=(m^k_{a,b})_{a,b\in\mathbb{Z}_{d_k}}.$  Then for  each pair $|\psi\rangle,|\phi\rangle\in \mathcal{S}$  with $|\psi\rangle\neq |\phi\rangle$, we have
	\begin{equation}\label{eq:d1d2d3dk}
		\langle \psi| I_1\otimes \cdots \otimes E_k\otimes \cdots \otimes I_n|\phi\rangle=0.
	\end{equation}
	
	Now we consider Eq.~\eqref{eq:d1d2d3dk} for  the states $|\phi_{i}\rangle$ and $|\phi_{j}\rangle$  for $0\leq i\neq j\leq d_k-1.$
	Note that the  parties except the $k$th of each term of  $|\phi_{i}\rangle$ and $|\phi_{j}\rangle$ are orthogonal except the terms $|0\cdots 0i0\cdots 0\rangle$ (corresponding to $|\psi_i\rangle$) and  $|0\cdots 0j0\cdots 0\rangle$ (corresponding to $|\psi_j\rangle$) where $i,j$ are in the $k$th position.
	Therefore, by Lemma \ref{lem:zero}, we could obtain that  $m^k_{i,j}=m^k_{j,i}=0.$  Therefore,   the off-diagonal entries of  $E_k$ are all zeros.

	For $1\leq i \leq d_k-1$, considering Eq.~\eqref{eq:d1d2d3dk} for the states $|S\rangle$ and $|\phi_{i}\rangle$, we   get $ m^k_{i,i}=m^k_{0,0}$   by Lemma~\ref{lem:Dia}.
	Therefore, $E_{k}\propto \mathbb {I}$. So the $k$th party cannot start with a nontrivial  orthogonality-preserving measurement.
	
	Let us consider the last party, i.e., the $n$th party.
	Suppose that  the $n$th party  starts with the first orthogonality-preserving measurement whose elements are  represented as $E_n=(m^n_{a,b})_{a,b\in\mathbb{Z}_{d_n}}.$  Then for  each pair $|\psi\rangle,|\phi\rangle\in \mathcal{S}$  with $|\psi\rangle\neq |\phi\rangle$, we have
	\begin{equation}\label{eq:d1d2d3dn}
		\langle \psi| I_1\otimes I_2\otimes\cdots \otimes I_{n-1} \otimes  E_n|\phi\rangle=0.
	\end{equation}

Considering Eq.~\eqref{eq:d1d2d3dn}  for the pair $|\phi_i\rangle$ and $|\phi_j\rangle$ where $0\leq i\neq j\leq d_n-1$, by  Lemma~\ref{lem:zero},  we could obtain that the off diagonal  entries $m^n_{i,j}$ of the matrix $E_{n}$ are zero except $d_{n-1}\leq i\neq j\leq d_n-1$ (see  Table \ref{5_result}). Now we only consider the remaining off diagonal entries of the matrix $E_{n}$.

For $d_{n-1}\leq i_n<i_n' \leq d_n-1, $ we consider Eq.~\eqref{eq:d1d2d3dn}  for the pair $|\phi_{i_n}\rangle$ and $|\phi_{i_n'}\rangle.$  That is,  $ [\langle 00\cdots0i_{n}|-\langle 21\cdots1(i_{n}-1)|]I_{1}\otimes I_{2} \otimes\cdots \otimes E_{n}[|00\cdots0i_{n}'\rangle -|21\cdots1(i_{n}'-1)\rangle]=0$
from which we deduce that
$m_{i_{n},i_{n}'}^{n}=-m_{i_{n}-1,i_{n}'-1}^{n}$.
Therefore, we have
$$m^n_{i_{n},i_{n}'}= (-1)^{i_{n}-d_{n-1}+1}m^n_{d_{n-1}-1,(i_{n}'-i_{n}+d_{n-1}-1)}=0$$
where  the last equality  has been deduced previously.
Thus we get $m^n_{i_{n},i_{n}'}=m^n_{i_{n}',i_{n}}=0.$  Hence we have  that the off-diagonal entries of the matrix $E_{n}$ are zeros.

By Lemma~\ref{lem:Dia}, all diagonal entries of the matrix $E_{n}$ are equal from Table~\ref{6_result}.

\begin{table}[htbp]
    \newcommand{\tabincell}[2]{\begin{tabular}{@{}#1@{}}#2\end{tabular}}
    \centering
    \caption{\label{6_result} ~Diagonal entries of~$E_{n}=(m_{a,b}^{n})_{a,b\in \mathbb Z_{d_{k}}}$.}
    \begin{tabular}{ccc}
        \toprule
        \hline
        \hline
        \specialrule{0em}{1.5pt}{1.5pt}
        A pair of states~&~Diagonal entries~&~Value Range\\
        \specialrule{0em}{1.5pt}{1.5pt}
        \midrule
        \hline
        \specialrule{0em}{1.5pt}{1.5pt}
        \tabincell{c}{$|S\rangle$, $|\phi_{i_{1}}\rangle$}&\tabincell{c}{$m_{0,0}^{n}= m_{i_{1},i_{1}}^{n}$}&\tabincell{c}{$1\leq i_{1}\leq d_{1}-1$  }\\
        \specialrule{0em}{3.5pt}{3.5pt}
        \tabincell{c}{$|S\rangle$, $|\phi_{i_{2}}\rangle$}&\tabincell{c}{$m_{0,0}^{n}= m_{i_{2},i_{2}}^{n}$}&\tabincell{c}{$d_{1}\leq i_{2}\leq d_{2}-1$  }\\

        \tabincell{c}{\vdots}&\tabincell{c}{\vdots}&\tabincell{c}{\vdots} \\

        \tabincell{c}{$|S\rangle$, $|\phi_{i_{n-1}}\rangle$}&\tabincell{c}{$m_{0,0}^{n}= m_{i_{n-1},i_{n-1}}^{n}$}&\tabincell{c}{$d_{n-2}\leq i_{n-1}\leq d_{n-1}-1$ }\\
        \specialrule{0em}{3.5pt}{3.5pt}
        \tabincell{c}{$|S\rangle$, $|\phi_{i_{n}}\rangle$}&\tabincell{c}{$m_{i_{n},i_{n}}^{n}= m_{i_{n}-1,i_{n}-1}^{n}$}&\tabincell{c}{$d_{n-1}\leq i_{n}\leq d_{n}-1$} \\
        \bottomrule
        \specialrule{0em}{1.5pt}{1.5pt}
        \hline
        \hline
    \end{tabular}
\end{table}

Therefore, all parties can only start with a trivial orthogonality preserving measurement. The set of $d_{n}+1$ orthogonal states is locally stable.
\end{proof}

 Moreover, we put forward a new construction of the locally stable set in $\otimes^{n}_{i=1}\mathbb{C}^{d_{i}}$ ($3\leq d_{1}\leq d_{2}\leq\cdots\leq d_{n}, n\geq 3$), which is composed of genuine entangled states apart from one full product state and also reach the minimum cardinality of the locally stable set proposed in Ref. \cite{Li2022}; see Appendix \ref{sec:AppA} for the details.

Many efforts have been made to reduce the cardinality of locally indistinguishable sets. Here we list the cardinalities of locally indistinguishable sets that have been known before (see Table~\ref{members}).  As locally stable sets are always locally indistinguishable,  there exists some locally indistinguishable sets with cardinality   $d_{n}+1$ in  $\mathbb{C}^{d_{1}}\otimes \mathbb{C}^{d_{2}}\otimes \cdots\otimes \mathbb{C}^{d_{n}}$ (where we assume $d_1\leq\cdots \leq d_n$). Thus our work has made a significant improvement towards addressing this issue.

\begin{table}[htbp]
	\newcommand{\tabincell}[2]{\begin{tabular}{@{}#1@{}}#2\end{tabular}}
	\centering
	\caption{\label{members}Incomplete list of the cardinalities of locally indistinguishable sets that are known before. }
	\begin{tabular}{ccc}
		\toprule
		\hline
		\hline
		\specialrule{0em}{1.5pt}{1.5pt}
		~reference~~&~~~system~~~&~~~cardinality~~~\\
		\specialrule{0em}{1.5pt}{1.5pt}
		\midrule
		\hline
		\specialrule{0em}{1.5pt}{1.5pt}
		\tabincell{c}{\cite{Yu2015}}&\tabincell{c}{$\mathbb{C}^{d}\otimes \mathbb{C}^{d}$}&\tabincell{c}{$2d-1$} \\
		\specialrule{0em}{3.5pt}{3.5pt}
		\tabincell{c}{\cite{Halder2018}}&\tabincell{c}{$(\mathbb{C}^{d})^{\otimes n}$}&~~\tabincell{c}{$2n(d+1)$} \\
		\specialrule{0em}{3.5pt}{3.5pt}
		\tabincell{c}{\cite{Zhen2022}}&\tabincell{c}{$(\mathbb{C}^{d})^{\otimes n}$}&~~\tabincell{c}{$n(d-1)+1$} \\
		\specialrule{0em}{3.5pt}{3.5pt}
		\tabincell{c}{\cite{Wangyl2015}}&\tabincell{c}{$\mathbb{C}^{m}\otimes \mathbb{C}^{n}$}&\tabincell{c}{$3(m+n)-9$} \\
		\specialrule{0em}{3.5pt}{3.5pt}
		\tabincell{c}{\cite{Zhangzc2016}}&\tabincell{c}{$\mathbb{C}^{m}\otimes \mathbb{C}^{n}$}&\tabincell{c}{$2n-1$} \\
		\specialrule{0em}{3.5pt}{3.5pt}
		\tabincell{c}{\cite{Xu2021}}&\tabincell{c}{$\mathbb{C}^{m}\otimes \mathbb{C}^{n}$}&\tabincell{c}{$2(m+n)-4$} \\
		\specialrule{0em}{3.5pt}{3.5pt}
		\tabincell{c}{\cite{Wangyl2017}}&\tabincell{c}{$\mathbb{C}^{n_1}\otimes \mathbb{C}^{n_2}\otimes \mathbb{C}^{n_3}$}&\tabincell{c}{$2(n_2+n_3)-3$} \\
		\specialrule{0em}{3.5pt}{3.5pt}
		\tabincell{c}{\cite{Jiang2020}}&\tabincell{c}{$\mathbb{C}^{d_{1}}\otimes \mathbb{C}^{d_{2}}\otimes \cdots\otimes \mathbb{C}^{d_{n}}$}&~~\tabincell{c}{$\sum_{i=1}^{n}(2d_{i}-3)+1$} \\		
		\specialrule{0em}{3.5pt}{3.5pt}
		\tabincell{c}{\cite{Zhen2022}}&\tabincell{c}{$\mathbb{C}^{d_{1}}\otimes \mathbb{C}^{d_{2}}\otimes \cdots\otimes \mathbb{C}^{d_{n}}$}&~~\tabincell{c}{$\sum\limits_{i=2}^{n-1}d_{i}+2d_{n}-n+1$} \\
		\bottomrule
		\specialrule{0em}{1.5pt}{1.5pt}
		\hline
		\hline
	\end{tabular}
\end{table}

\section{Conclusion}
We studied the construction of locally stable sets   for the given multipartite systems.
 It is interesting to note that the structures reach the lower bound of the cardinality on the locally stable sets. In fact, we presented the constructions of locally stable sets with minimum cardinality in bipartite quantum systems $\mathbb{C}^{d}\otimes \mathbb{C}^{d}$ and $\mathbb{C}^{d_{1}}\otimes \mathbb{C}^{d_{2}}$. Then we presented a construction of $d+1$ orthogonal states in $(\mathbb{C}^{d})^{\otimes n}$ and proved that the set is locally stable. Furthermore, we generalized our construction to more general cases and put forward two structures of $d_{n}+1$ orthogonal states for arbitrary multipartite quantum systems. Our results give a complete answer to the open problem raised in Ref. \cite{Li2022}. Moreover, all of our constructed locally stable  sets are
optimal in the sense that  removing any state from this set makes it impossible to achieve local stability again.

Here we have considered the constructions of the smallest locally stable sets by utilizing entangled states and a stopper state. However, there are two very important questions that deserve further research. Can we construct  the strongest nonlocal sets that  reach the corresponding lower bound? How do we quantify the strength of quantum nonlocality?

\section*{Acknowledgments}

This work is supported by the Natural Science Foundation of Hebei Province (Grant No. F2021205001), NSFC (Grants No. 62272208, No. 11871019, and No. 12005092).

\appendix

\section{PROOFS OF LEMMA \ref{lem:zero} AND LEMMA \ref{lem:Dia}}\label{append:A}

\noindent {\bf Proof of Lemma \ref{lem:zero}.}	
	Substituting the expressions
	$$
	\begin{array}{rcl}
		|\phi_{i}\rangle&=&\sum\limits^{p_{i}-1}_{t=0}\omega_{p_{i}}^{t}|i_{1}^{t}\rangle_{A_{1}}|i_{2}^{t}\rangle_{A_{2}}\cdots|i_{n}^{t}\rangle_{A_{n}},\\
		|\phi_{j}\rangle&=&\sum\limits^{p_{j}-1}_{s=0}\omega_{p_{j}}^{s}|j_{1}^{s}\rangle_{A_{1}}|j_{2}^{s}\rangle_{A_{2}}\cdots|j_{n}^{s}\rangle_{A_{n}}
	\end{array}$$ into
	\begin{equation}\notag
		\langle\phi_{i}|I_{1}\otimes \cdots\otimes E_{k}\otimes\cdots\otimes I_{n}|\phi_{j}\rangle =0,
	\end{equation}
 we obtain
	\begin{equation}\notag
		\begin{aligned}
			(\sum\limits^{p_{i}-1}_{t=0}\omega_{p_{i}}^{-t}\langle i_{1}^{t}|\langle i_{2}^{t}|\cdots \langle i_{n}^{t}|) I_{1}\otimes \cdots\otimes E_{k}\otimes\cdots \\ \otimes I_{n}(\sum\limits^{p_{j}-1}_{s=0}\omega_{p_{j}}^{s}|j_{1}^{s}\rangle|j_{2}^{s}\rangle\cdots|j_{n}^{s}\rangle) =0.
		\end{aligned}
	\end{equation}
	Further,
	\begin{equation}\notag
		\sum\limits^{p_{j}-1}_{s=0}\sum\limits^{p_{i}-1}_{t=0}\omega_{p_{i}}^{-t}\omega_{p_{j}}^{s} \langle i_{k}^{t}|E_{k}|j_{k}^{s}\rangle_{A_k} \prod_{\ell\neq k} \langle i_\ell^{t}| j_\ell^{s}\rangle_{A_\ell} =0.
	\end{equation}
	Since there is only one pair $(t_0,s_0) \in \mathbb{Z}_{p_i} \times \mathbb{Z}_{p_j}$ such that $\prod_{\ell\neq k} \langle i_\ell^{t_0}| j_\ell^{s_0}\rangle_{A_\ell}\neq 0$,  therefore we can get $\omega_{p_{i}}^{-t_{0}}\omega_{p_{j}}^{s_{0}}  \langle i_{k}^{t_{0}}|E_{k}|j_{k}^{s_{0}}\rangle_{A_k} \prod_{\ell\neq k} \langle i_\ell^{t_0}| j_\ell^{s_0}\rangle_{A_\ell} =0.$ Then $\langle i_{k}^{t_{0}}|E_{k}|j_{k}^{s_{0}}\rangle=0$, which means that ~$m^k_{i^{t_{0}}_{k},j^{s_{0}}_{k}}=0.$
\qed

\vskip 20pt

\noindent {\bf Proof of Lemma \ref{lem:Dia}.}
	 	Substituting the expressions
	 $$
	 \begin{array}{rcl}
	 	|S\rangle&=&\otimes^{n}_{k=1}(\sum\limits_{i_{k} \in \mathbb{Z}_{d_{k}}}|i_{k}\rangle_{A_{k}}),\\
	 	|\phi_{i}\rangle&=&\sum\limits^{p -1}_{t=0}\omega_{p }^{t}|i_{1}^{t}\rangle_{A_{1}}|i_{2}^{t}\rangle_{A_{2}}\cdots|i_{n}^{t}\rangle_{A_{n}}
	 \end{array}$$ into
	\begin{equation}\notag
		\langle S|I_{1}\otimes \cdots\otimes E_{k}\otimes\cdots\otimes I_{n}|\phi_{i}\rangle =0,
	\end{equation}
we obtain
	\begin{equation}\notag
		\begin{aligned}
			(\otimes^{n}_{k=1}(\sum\limits_{i_{k} \in Z_{d_{k}}}\langle i_{k}|) )I_{1}\otimes \cdots\otimes E_{k}\otimes\cdots \\ \otimes I_{n}(\sum\limits^{p-1}_{t=0}\omega_{p}^{t}|i_{1}^{t}\rangle|i_{2}^{t}\rangle\cdots|i_{n}^{t}\rangle)=0.
		\end{aligned}
	\end{equation}
	Further,
	\begin{equation}\notag
		\begin{aligned}
			\sum\limits^{p-1}_{t=0}\omega_{p}^{t}\langle 0+\cdots+(d_{1}-1)|i_{1}^{t}\rangle \cdots \langle 0+\cdots+(d_{k}-1)|E_{k}|i_{k}^{t}\rangle\\ \cdots \langle 0+\cdots+(d_{n}-1)|i_{n}^{t}\rangle =0.
		\end{aligned}
	\end{equation}
	Moreover,
	\begin{equation}\notag
		\sum\limits^{p-1}_{t=0}\omega_{p}^{t}\langle 0+\cdots+(d_{k}-1)|E_{k}|i_{k}^{t}\rangle=0.
	\end{equation}
	Since all $m_{a,b}^{k}=0$ with $0\leq a\neq b\leq d_{k}-1$, this means that
	\begin{equation}\notag
		m^k_{i^{0}_{k},i^{0}_{k}}+\omega_{p}m^k_{i^{1}_{k},i^{1}_{k}}+\omega^{2}_{p}m^k_{i^{2}_{k},i^{2}_{k}}+\cdots+\omega^{p-1}_{p}m^k_{i^{p-1}_{k},i^{p-1}_{k}}=0.
	\end{equation}
	
	If there exist only two different values $i^{t_{0}}_{k}$ and $i^{t_{1}}_{k}$ for $i^{0}_{k}, i^{1}_{k},\cdots, i^{p-1}_{k}$, this means that $p$ elements are divided into two groups. There may be ~$p-1$ elements equal, ~$p-2$ elements that are equal and other ~$2$ elements are equal, ~$p-3$ elements that are equal and the remaining ~$3$ elements are equal, etc. Here we only consider the following two cases; the others  can  be proved in a similar way.
	
	(1) Suppose ~$i^{0}_{k}=i^{t_{0}}_{k}$,~$i^{1}_{k}=\cdots=i^{p-1}_{k}=i^{t_{1}}_{k}$, then
	\begin{equation}\notag
		m^k_{i^{t_{0}}_{k},i^{t_{0}}_{k}}=-(\omega_{p}+\omega^{2}_{p}+\cdots+\omega^{p-1}_{p})m^k_{i^{t_{1}}_{k},i^{t_{1}}_{k}}.
	\end{equation}

	(2) Suppose ~$i^{0}_{k}=i^{1}_{k}=i^{t_{0}}_{k}$,~$i^{2}_{k}=\cdots=i^{p-1}_{k}=i^{t_{1}}_{k}$, then
	\begin{equation}\notag
		(1+\omega_{p})m^k_{i^{t_{0}}_{k},i^{t_{0}}_{k}}=-(\omega^{2}_{p}+\cdots+\omega^{p-1}_{p})m^k_{i^{t_{1}}_{k},i^{t_{1}}_{k}}.
	\end{equation}
	We know that ~$1+\omega_{p}+\omega^{2}_{p}+\cdots+\omega^{p-1}_{p}=0$; hence $m^k_{i^{t_{0}}_{k},i^{t_{0}}_{k}}=m^k_{i^{t_{1}}_{k},i^{t_{1}}_{k}}.$\qed

\section{ANOTHER STRUCTURE IN $\otimes^{n}_{i=1}\mathbb{C}^{d_{i}}$}
\label{sec:AppA}

\begin{theorem}
\label{the:dn'} The following set  $\mathcal{S}$ of $d_{n}$ orthogonal genuine entangled states and one full product state are locally stable in $\otimes^{n}_{i=1}\mathbb{C}^{d_{i}}$ for $3\leq d_{1}\leq d_{2}\leq\cdots\leq d_{n}$, $n\geq 3$:
\begin{widetext}
\begin{equation}\label{eq:genuine}
\begin{aligned}
|\phi_{0}\rangle &=|00\cdots00\rangle_{A_{1}A_{2}\cdots A_{n}}-|11\cdots11\rangle_{A_{1}A_{2}\cdots A_{n}}, \\
|\phi_{i_{1}}\rangle&=|i_{1}0\cdots00\rangle_{A_{1}A_{2}\cdots A_{n}}+\omega_{n}|0i_{1}0\cdots00\rangle_{A_{1}A_{2}\cdots A_{n}}+\cdots+\omega_{n}^{n-2}|00\cdots0i_{1}0\rangle_{A_{1}A_{2}\cdots A_{n}}\\
&~~~~+\omega_{n}^{n-1}|00\cdots0i_{1}\rangle_{A_{1}A_{2}\cdots A_{n}},~1\leq i_{1}\leq d_{1}-1\\
|\phi_{i_{2}}\rangle&=|0i_{2}0\cdots00\rangle_{A_{1}A_{2}\cdots A_{n}}+\omega_{n}|00i_{2}0\cdots00\rangle_{A_{1}A_{2}\cdots A_{n}}+\cdots+\omega_{n}^{n-2}|00\cdots0i_{2}\rangle_{A_{1}A_{2}\cdots A_{n}}\\
&~~~~+\omega_{n}^{n-1}|10\cdots0i_{2}\rangle_{A_{1}A_{2}\cdots A_{n}},~d_{1}\leq i_{2}\leq d_{2}-1\\
|\phi_{i_{3}}\rangle&=|00i_{3}0\cdots00\rangle_{A_{1}A_{2}\cdots A_{n}}+\omega_{n-1}|000i_{3}0\cdots00\rangle_{A_{1}A_{2}\cdots A_{n}}+\cdots+\omega_{n-1}^{n-3}|00\cdots0i_{3}\rangle_{A_{1}A_{2}\cdots A_{n}}\\ &~~~~+\omega_{n-1}^{n-2}|110\cdots0i_{3}\rangle_{A_{1}A_{2}\cdots A_{n}},~d_{2}\leq i_{3}\leq d_{3}-1\\
|\phi_{i_{4}}\rangle&=|000i_{4}0\cdots0\rangle_{A_{1}A_{2}\cdots A_{n}}+\omega_{n-2}|0000i_{4}0\cdots0\rangle_{A_{1}A_{2}\cdots A_{n}}+\cdots+\omega_{n-2}^{n-4}|00\cdots0i_{4}\rangle_{A_{1}A_{2}\cdots A_{n}}\\
&~~~~+\omega_{n-2}^{n-3}|1110\cdots0i_{4}\rangle_{A_{1}A_{2}\cdots A_{n}},~d_{3}\leq i_{4}\leq d_{4}-1\\
& ~~~~~~~~~~~~~~~~~~\cdots~ \cdots ~\cdots~~~~~~~~~\\
|\phi_{i_{n-1}}\rangle&=|0\cdots0i_{n-1}0\rangle_{A_{1}A_{2}\cdots A_{n}}+\omega_{3}|0\cdots0i_{n-1}\rangle_{A_{1}A_{2}\cdots A_{n}}+\omega_{3}^{2}|1\cdots10i_{n-1}\rangle_{A_{1}A_{2}\cdots A_{n}},~d_{n-2}\leq i_{n-1}\leq d_{n-1}-1\\
|\phi_{i_{n}}\rangle&=|00\cdots0i_{n}\rangle_{A_{1}A_{2}\cdots A_{n}}-|21\cdots1(i_n-1)\rangle_{A_{1}A_{2}\cdots A_{n}},~d_{n-1}\leq i_{n}\leq d_{n}-1 \\
|S\rangle & =|0+\cdots+(d_{1}-1)\rangle_{A_{1}}|0+\cdots+(d_{2}-1)\rangle_{A_{2}}\cdots|0+\cdots+(d_{n}-1)\rangle_{A_{n}}.\\
\end{aligned}
\end{equation}
\end{widetext}

\end{theorem}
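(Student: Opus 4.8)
The plan is to closely follow the proof of Theorem~\ref{the:dn}. The set in Eq.~\eqref{eq:genuine} differs from the one in Eq.~\eqref{eq:mulpro} only in that each $|\phi_{i_\ell}\rangle$ with $2\le\ell\le n-1$ carries an additional ``prefix-of-ones'' term $|1^{\ell-1}0^{n-\ell}i_\ell\rangle$; this is exactly the modification that upgrades those states — which were product across the cut $A_1|A_2\cdots A_n$ in Eq.~\eqref{eq:mulpro} — to genuinely entangled ones. I would first record two preliminaries. \emph{Orthogonality}: every entry appearing in a coordinate is $0$, $1$, $2$, or some $i_\ell\ge d_1\ge 3$, and no two states share a ``large'' entry in the same coordinate, so in fact every term-by-term inner product already vanishes; only the pairs that involve the new prefix-of-ones terms or $|\phi_0\rangle$ need a second glance, and there the prefix of ones always meets an $i_\ell\ge 3$ or a mismatched $0$ or $2$. \emph{Genuine entanglement}: $|\phi_0\rangle$ and each $|\phi_{i_n}\rangle$ are two-term states whose local components differ in every coordinate, hence have Schmidt rank $2$ across every cut; $|\phi_{i_1}\rangle$ is a W-type state, which is genuinely entangled; and for $2\le\ell\le n-1$ the generalized-W part and the prefix-of-ones term have distinct reduced states on both sides of every bipartition, so again the Schmidt rank is $\ge 2$ everywhere.

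For local stability I fix a party $k$, assume it applies an orthogonality-preserving element $E_k=(m^k_{a,b})_{a,b\in\mathbb{Z}_{d_k}}$, and must show $E_k\propto\mathbb{I}$. I would first kill all off-diagonal entries with Lemma~\ref{lem:zero}: for a pair of states in which coordinate $k$ takes the value $p$ in a single ``clean'' term — a computational basis state excited only in coordinate $k$, as occurs for $|\phi_0\rangle$ and the W-type states $|\phi_{i_1}\rangle$ — and the value $q$ in another such term, exactly one term-pair survives the condition $\prod_{\ell\neq k}\langle\cdot|\cdot\rangle\neq 0$, forcing $m^k_{p,q}=0$; running over all such pairs (organised in a table analogous to Table~\ref{5_result}) annihilates every off-diagonal entry of $E_k$ for $k\le n-1$. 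The new prefix-of-ones terms only ever place the value $1$ in a low coordinate, and precisely when they would be paired against a ``large'' index that index lies outside $\{0,\dots,d_k-1\}$, so they cause no trouble for $k\le n-1$. Then, since in each $|\phi_a\rangle$ coordinate $k$ takes at most two values, Lemma~\ref{lem:Dia} applied to the pairs $|S\rangle,|\phi_a\rangle$ equates all the diagonal entries of $E_k$, so $E_k\propto\mathbb{I}$.

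For the last party $k=n$ two genuinely new effects show up. First, in $|\phi_{i_\ell}\rangle$ with $2\le\ell\le n-1$ the value $i_\ell$ sits in coordinate $n$ both in the generalized-W term that ends in $i_\ell$ and in the prefix-of-ones term, so pairing this state with $|\phi_{i_\ell'}\rangle$ leaves two surviving term-pairs; but both feed the single entry $m^n_{i_\ell,i_\ell'}$, with phase coefficients $\omega_{n-\ell+2}^{-(n-\ell)}\omega_{n-\ell+2}^{n-\ell}=1$ and $\omega_{n-\ell+2}^{-(n-\ell+1)}\omega_{n-\ell+2}^{n-\ell+1}=1$, whose sum $2$ is nonzero, so still $m^n_{i_\ell,i_\ell'}=0$. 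Second, exactly as for $E_n$ in Theorem~\ref{the:dn}, the pairs $|\phi_{i_n}\rangle,|\phi_{i_n'}\rangle$ with $d_{n-1}\le i_n<i_n'\le d_n-1$ yield only the recursion $m^n_{i_n,i_n'}=-m^n_{i_n-1,i_n'-1}$, which telescopes down to an off-diagonal entry already shown to vanish, and $|S\rangle,|\phi_{i_n}\rangle$ yields $m^n_{i_n,i_n}=m^n_{i_n-1,i_n-1}$, which telescopes to $m^n_{0,0}$. Collecting everything, $E_n\propto\mathbb{I}$, and by Definition~\ref{def:stable} the set is locally stable.

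The principal obstacle I expect is the case analysis for $k=n$: one has to check, pair of states by pair of states, that after imposing $\prod_{\ell\neq n}\langle\cdot|\cdot\rangle\neq 0$ the surviving term-pairs either all feed one matrix entry with nonzero coefficient sum or produce exactly the telescoping recursion above — in particular, that the three species of terms (generalized-W terms, prefix-of-ones terms, and the shifted term $|21\cdots1(i_n-1)\rangle$ of $|\phi_{i_n}\rangle$) never conspire to leave an entry of $E_n$ undetermined. The parts concerning parties $1,\dots,n-1$, the diagonal entries, and genuine entanglement are routine adaptations of material already in the paper.
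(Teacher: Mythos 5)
Your overall strategy is the paper's: inherit the argument of Theorem~\ref{the:dn} wholesale and isolate the places where the new ``prefix-of-ones'' terms break the single-surviving-pair hypothesis of Lemma~\ref{lem:zero} for the $n$th party. Your treatment of the same-block pairs $|\phi_{i_\ell}\rangle,|\phi_{i_\ell'}\rangle$ (two surviving term-pairs feeding the single entry $m^n_{i_\ell,i_\ell'}$ with coefficient sum $2$) and of the telescoping recursion for $|\phi_{i_n}\rangle,|\phi_{i_n'}\rangle$ coincides with what the paper does, and your remarks about parties $1,\dots,n-1$ and the diagonal entries are correct.

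There is, however, one concrete case you miss, and it falsifies the dichotomy you assert (``the surviving term-pairs either all feed one matrix entry with nonzero coefficient sum or produce exactly the telescoping recursion''). Take the pair $|\phi_{1}\rangle$ (the W state with $i_1=1$) and $|\phi_{i_2}\rangle$ and the party $k=n$. The prefix term $|10\cdots0i_{2}\rangle$ of $|\phi_{i_2}\rangle$ agrees with the term $|10\cdots00\rangle$ of $|\phi_{1}\rangle$ on the first $n-1$ coordinates, and the term $|0\cdots0i_{2}\rangle$ agrees with $|0\cdots01\rangle$ there as well; so two term-pairs survive, but they feed two \emph{different} entries, yielding the single relation $\omega_{n}^{\,n-1}m^{n}_{0,i_{2}}+\omega_{n}^{-1}m^{n}_{1,i_{2}}=0$ rather than either of your two patterns. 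The resolution is the same trick used for $m^1_{0,2}$ in Theorem~\ref{the:$d+1$}: first derive $m^{n}_{0,i_{2}}=0$ from the pair $|\phi_{0}\rangle,|\phi_{i_2}\rangle$ (for which only one term-pair survives), then conclude $m^{n}_{1,i_{2}}=0$. This is precisely the second of the two special computations the paper's proof of Theorem~\ref{the:dn'} carries out explicitly, so your plan needs this third pattern — a linear relation between two entries, one of which has already been shown to vanish — added to the $k=n$ case analysis; with that amendment (and a check that no other cross-block pair produces a two-entry relation, which holds because for $\ell\ge 3$ the prefix of $|\phi_{i_\ell}\rangle$ contains at least two $1$'s and cannot meet any term of $|\phi_{i_1}\rangle$ or $|\phi_{0}\rangle$ on the first $n-1$ coordinates), the proof closes.
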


\begin{proof} Comparing with Eq.~\eqref{eq:mulpro}, in Eq.~\eqref{eq:genuine}, we only made some slight adjustments such that ~$|\phi_{i_{1}}\rangle\sim |\phi_{i_{n}}\rangle$ are genuinely entangled states. Thus we only need to consider some special entries of the matrix $E_{n}$.

For the states $|\phi_{i_{2}}\rangle$ and $|\phi_{i_{2}'}\rangle$, where $d_{1}\leq i_{2}\neq i_{2}'\leq d_{2}-1$, we have $(\langle 0i_{2}\cdots0|+\cdots+\omega_{n}^{2-n}\langle0\cdots0i_{2}|+\omega_{n}^{1-n}\langle10\cdots0i_{2}|)I_{1}\otimes I_{2}\otimes \cdots\otimes E_{n}(|0i_{2}'\cdots0\rangle+\cdots+\omega_{n}^{n-2}|00\cdots0i_{2}'\rangle+\omega_{n}^{n-1}|10\cdots0i_{2}'\rangle)=0$. Because of the fact that only $|00\cdots0i_{2}\rangle_{A_{1}\cdots A_{n}}$ and $|00\cdots0i_{2}'\rangle_{A_{1}\cdots A_{n}}$, $|100\cdots0i_{2}\rangle_{A_{1}\cdots A_{n}}$ and $|100\cdots0i_{2}'\rangle_{A_{1}\cdots A_{n}}$ are not orthogonal on $n-1$ subsystems except the $n$th subsystem, so $\langle i_{2}|E_{n}|i_{2}'\rangle+\langle i_{2}|E_{n}|i_{2}'\rangle=0$; thus $m_{i_{2},i_{2}'}^{n}=m_{i_{2}',i_{2}}^{n}=0$ for $d_{1}\leq i_{2}\neq i_{2}'\leq d_{2}-1$.

~~~~~~~~~~~~~~~~~~~~~~~~~~~~~~~~~~\vdots

Similarly, from the states $|\phi_{i_{n-1}}\rangle$ and $|\phi_{i_{n-1}'}\rangle$, we can get $m_{i_{n-1},i_{n-1}'}^{n}=m_{i_{n-1}',i_{n-1}}^{n}=0$ for $d_{n-2}\leq i_{n-1}\neq i_{n-1}'\leq d_{n-1}-1$.

From the states $|\phi_{1}\rangle$ and $|\phi_{i_{2}}\rangle$, we have $(\langle 10\cdots0|+\cdots+\omega_{n}^{2-n}\langle00\cdots010|+\omega_{n}^{1-n}\langle00\cdots01|)I_{1}\otimes I_{2}\otimes \cdots\otimes E_{n}(|0i_{2}0\cdots0\rangle+\cdots+\omega_{n}^{n-2}|00\cdots0i_{2}\rangle+\omega_{n}^{n-1}|10\cdots0i_{2}\rangle)=0$.  Because only $|10\cdots0\rangle_{A_{1}\cdots A_{n}}$ and $|10\cdots0i_{2}\rangle_{A_{1}\cdots A_{n}}$, $|00\cdots01\rangle_{A_{1}\cdots A_{n}}$ and $|00\cdots0i_{2}\rangle_{A_{1}\cdots A_{n}}$ are not orthogonal on $n-1$ subsystems except the $n$th subsystem, then $\omega_{n}^{n-1}\langle 0|E_{n}|i_{2}\rangle+\omega_{n}^{-1}\langle 1|E_{n}|i_{2}\rangle=0$; i.e., $\omega_{n}^{n-1}m_{0,i_{2}}^{n}+\omega_{n}^{-1}m_{1,i_{2}}^{n}=0$, since $m_{0,i_{2}}^{n}=m_{i_{2},0}^{n}=0$, and we can get $m_{1,i_{2}}^{n}=m_{i_{2},1}^{n}=0$ for $d_{1}\leq i_{2}\leq d_{2}-1$.

Therefore, all parties can only start with a trivial orthogonality preserving measurement. The set of $d_{n}+1$ orthogonal states is locally stable.
\end{proof}

\end{document}